\tikzset{join/.code=\tikzset{after node path={%
\ifx\tikzchainprevious\pgfutil@empty\else(\tikzchainprevious)%
edge[every join]#1(\tikzchaincurrent)\fi}}}
\tikzset{>=stealth',every on chain/.append style={join},
         every join/.style={->}}
\numberwithin{equation}{section}
\numberwithin{figure}{section}
       \newtheorem{theorem}{Theorem}[section]
       \newtheorem{proposition}[theorem]{Proposition}
       \newtheorem{lemma}[theorem]{Lemma}
\theoremstyle{definition}
       \newtheorem{definition}[theorem]{Definition}
\newcommand{\pk}{\mathbb{CP}^{k}}
\newcommand{\W}[2]{{H^{#1,#2}}}
\newcommand{\Wnorm}[3]{\left\Vert{#1}\right\Vert_{H^{#2,#3}}}
\newcommand{\Wnormloc}[4]{\left\Vert{#1}\right\Vert_{H^{#2,#3}(#4)}}
\newcommand{\p}{\partial}
\newcommand{\Hol}{\mathcal{H}_{r,k}}
\newcommand{\cHol}{\overline{\mathcal{H}_{r,k}}}
\newcommand{\sym}[1]{Sym^{#1}\Sigma}
\newcommand{\symr}{Sym^r \Sigma}
\newcommand{\xs}{x_s^R}
\newcommand{\ys}{y_s^R}
\newcommand{\varphisa}{\tilde{\varphi}_s^\alpha}
\newcommand{\varphisb}{\tilde{\varphi}_s^\beta}
\newcommand{\vsa}{\tilde{v}_s^\alpha}
\newcommand{\vsb}{\tilde{v}_s^\beta}
\begin{document}

\title[ $L^2$ Volume of $\Hol$]{The $L^2$ Volume of the Space of Holomorphic Maps from K\"ahler Riemann Surfaces to $\pk$}
\author{Chih-Chung Liu}
\maketitle

\begin{abstract}

We prove the conjectural formula for the $L^2$ volume of the space of degree $r$ holomorphic maps from a compact K\"ahler Riemann surface of genus $b$ to $\pk$. This formula was posed in \cite{Ba} and rigorously verified in \cite{Sp} for a special case using independent techniques.

\end{abstract}

\section{Introduction}

Given a closed Riemann surface $\Sigma$ of genus $b$ with compatible Riemannian metric with K\"ahler form $\omega$, we study the space $\Hol$, or the space of degree $r$ holomorphic maps from $\Sigma$ to $\pk$. This space has generated great interest in topological field theory, in which $\Hol$ is the moduli space of charge $r$ $\pk$ lumps on $\Sigma$, or the static solitons to field equations of $\pk$ model on $\Sigma$ within the topological sector $r$. From the variational point of view, holomorphic maps are precisely the minimizers to the $L^2$ energy functional

\[E(\phi)=\int_\Sigma \|d\phi\|^2 dvol_\Sigma,\]

\noindent where the norm above will be defined in section 2. The volume of $\Hol$ with respect to the $L^2$ metric defined by the energy functional is a vital quantity in studying the dynamics of solitons. With a clear understanding of the volume growth, the equations of classical gas of lumps may be deduced and motions of solitons at low energy may be approximated. See \cite{M}, \cite{S}, or \cite{Sp} for further details.

A plausible formula for the $L^2$ volume of $\Hol$ has been conjectured in \cite{Ba} and rigorously verified in \cite{Sp} for the case $(r,b)=(1,0)$ (ie. degree 1 maps on $\mathbb{S}^2$). In this paper, we confirm the validity of the formula in general.

\begin{theorem}[Main Theorem]

The $L^2$ volume of $\Hol$ is

\[\frac{(k+1)^b}{q!},\]

\noindent where $q=b+(k+1)(r+1-b)-1$ and $b$ is the genus of $\Sigma$.

\label{Main Theorem Intro}

\end{theorem}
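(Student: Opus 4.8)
The plan is to realize $\Hol$ as a dense open subset of a projective bundle over the Picard torus $\Pic^{r}(\Sigma)$ and to identify the cohomology class of the $L^{2}$ K\"ahler form $\omega_{L^{2}}$ with the relative hyperplane class, after which the volume collapses to a Segre-class computation on the Jacobian. A degree $r$ holomorphic map $\phi\colon\Sigma\to\pk$ is given by $\phi=[s_{0}:\cdots:s_{k}]$, where $s_{0},\dots,s_{k}$ are holomorphic sections of $\cL=\phi^{*}\mathcal O(1)$, a line bundle of degree $r$, with no common zero, and two tuples determine the same map precisely when they differ by a nonzero scalar. As $\cL$ ranges over $J:=\Pic^{r}(\Sigma)$ and (for $r$ large, so that $h^{1}(\cL)=0$) $H^{0}(\Sigma,\cL)$ has constant dimension $r-b+1$, the direct image $W=R^{0}\pi_{*}\mathcal P$ of an appropriately normalized Poincar\'e bundle $\mathcal P$ is a rank $(r-b+1)$ bundle on $J$. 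I would then show that
\[\Hol\;=\;\pp\bigl(W^{\oplus(k+1)}\bigr)\setminus Z,\]
where $Z$ is the closed locus of tuples possessing a common zero, and that $\pp(W^{\oplus(k+1)})$ has complex dimension $N+b=q$ with $N=(k+1)(r-b+1)-1$. Since $Z$ has complex codimension $\ge 1$, it is $L^{2}$-null and does not affect the volume.

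The identification of the K\"ahler class is the heart of the argument and the step I expect to be the main obstacle. The tangent space $T_{\phi}\Hol=H^{0}(\phi^{*}T\pk)$ carries the $L^{2}$ inner product induced by $\omega$ and by the Fubini--Study metric on $\pk$. I would show that, fiberwise, this pairing agrees with the Fubini--Study form of $\pp(W^{\oplus(k+1)})$ attached to the $L^{2}$ Hermitian metric $\langle s,s'\rangle=\int_{\Sigma}\langle s,s'\rangle_{h}\,\omega$ on the spaces of sections, and that the base directions contribute only an exact correction, so that $[\omega_{L^{2}}]=\zeta$, the relative hyperplane class $c_{1}(\mathcal O_{\pp(W^{\oplus(k+1)})}(1))$, in $H^{2}(\Hol)$. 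The delicacy is twofold: the $L^{2}$ metric is defined analytically from the energy functional, so matching it to the algebraic class requires an explicit K\"ahler-potential (or moment-map) computation using the Euler sequence; and one must check that $\omega_{L^{2}}^{q}$ extends across $Z$ as a measure whose total mass equals the intersection number on the compact bundle. The metric independence of the final answer is a useful consistency check, since $\zeta$ is insensitive to the scale of $\omega$ and to the normalization of the target metric.

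Granting $[\omega_{L^{2}}]=\zeta$, the $L^{2}$ volume is the symplectic volume
\[\Vol(\Hol)=\int_{\Hol}\frac{\omega_{L^{2}}^{q}}{q!}=\frac{1}{q!}\int_{\pp(W^{\oplus(k+1)})}\zeta^{q}.\]
Pushing forward along $\pi\colon\pp(W^{\oplus(k+1)})\to J$ and using $\pi_{*}\zeta^{N+i}=s_{i}(W^{\oplus(k+1)})$ gives $\int_{\pp(\cdots)}\zeta^{q}=\int_{J}s_{b}(W^{\oplus(k+1)})$. By the standard description of the tautological bundle over the Jacobian, $c(W)=e^{-\theta}$, whence $s(W^{\oplus(k+1)})=e^{(k+1)\theta}$ and $s_{b}(W^{\oplus(k+1)})=(k+1)^{b}\theta^{b}/b!$. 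Since $\int_{J}\theta^{b}=b!$ for the principal polarization $\theta$, we obtain $\int_{J}s_{b}(W^{\oplus(k+1)})=(k+1)^{b}$ and hence $\Vol(\Hol)=(k+1)^{b}/q!$. The case $b=0$ reduces to $\int_{\mathbb{CP}^{q}}\zeta^{q}=1$, recovering the value $1/q!$ and the verified case $(r,b)=(1,0)$. Finally, for small $r$ the Brill--Noether loci where $h^{0}(\cL)$ jumps must be treated as measure-zero exceptional subsets of $J$, where the fibration description degenerates but the volume integral is unaffected.
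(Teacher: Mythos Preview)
Your approach is genuinely different from the paper's. The paper does \emph{not} attempt to identify the K\"ahler class of $\omega_{L^{2}}$ directly. Instead it exploits the one-parameter family of vortex moduli spaces $\nu_{k+1}(s)$, each diffeomorphic to the compactification $\cHol\simeq\pp(W^{\oplus(k+1)})$, whose $L^{2}$ metrics $g_{s}$ have \emph{known} K\"ahler classes (this is Baptista's computation, Theorem~\ref{volume for nu k}, which works because the vortex space is a genuine symplectic quotient and hence admits a moment-map description). The paper then establishes that the pulled-back metrics $\Phi_{s}^{*}g_{s}$ converge to the $L^{2}$ metric on $\Hol$ strongly enough---via dominated-convergence estimates on $\cHol\times\Sigma$ (Propositions~\ref{L p convergence of Z}--\ref{L p convergence of X})---that the volume integrals converge, and reads off the answer as the $s\to\infty$ limit of the explicit formula \eqref{volume of moduli space}. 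Your Segre-class computation $\int_{J}s_{b}(W^{\oplus(k+1)})=(k+1)^{b}$ is correct and is essentially what underlies Baptista's formula at the algebraic level; your route would be cleaner if it went through.

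The genuine gap is precisely the step you flag as ``the heart of the argument'': the identification $[\omega_{L^{2}}]=\zeta$. You propose to check that the fiberwise $L^{2}$ pairing matches Fubini--Study on $\pp(W^{\oplus(k+1)})$ and that the base directions contribute only an exact term, but no mechanism is given for the second claim, and the $L^{2}$ metric on $\Hol$ has no evident symplectic-quotient structure to supply one. Your ``consistency check''---that $\zeta$ is insensitive to the scale of $\omega$---is in fact a warning rather than a confirmation: the $L^{2}$ metric scales linearly with $\Vol(\Sigma)$, so $[\omega_{L^{2}}]$ cannot literally be a purely topological class; the stated formula presumes the normalization $\Vol(\Sigma)=1$ used throughout the paper, and at best one expects $[\omega_{L^{2}}]=c\cdot\Vol(\Sigma)\,\zeta$ for some constant $c$ determined by the Fubini--Study normalization. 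Pinning down this coefficient and showing that no $\theta$-term survives is exactly what the paper extracts from the vortex limit rather than proving directly. Similarly, your acknowledgement that $\omega_{L^{2}}^{q}$ must be shown to extend across $Z$ with the correct mass is well placed, but this is not a formality: the paper's three convergence propositions are devoted to controlling the metric uniformly over the compactification, and a direct argument would need a comparable analytic input.
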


Throughout this paper, we make the universal assumption $r>2-2b$ that ensures $\Hol$ to be a nonempty complex manifold.

\section{Preliminaries}

In this section we aim to provide a concrete description of the space $\Hol$ as a complex manifold of finite dimension, equipped with the standard $L^2$ metrics on maps. Theorem \ref{description of Hol}, the main result of this section, is an easy consequence of results from \cite{Mi}, \cite{Gr}, and \cite{KM}. To ensure that the $L^2$ metric is meaningful, we assume that the images of the holomorphic maps discussed here are non-degenerate, that is, they do not lie in any hyperplane.

For clarity, we first list several basic and well known definitions of the objects used in this article and their basic properties. Readers familiar with these notions may skip to Theorem \ref{description of Hol} and consequences following it. The elementary building block for various moduli spaces is the $r$ symmetric product of $\Sigma$. For Riemann surface $\Sigma$, it is equivalently the space of unordered $r$-tuple of points on $\Sigma$, or the space of effective divisors of degree $r$.

\begin{definition}[Symmetric Product]

Given the $r$-fold direct product of $\Sigma$, we define

\[Sym^r \Sigma := \Sigma \times \ldots \times \Sigma / \sim,\]

\noindent where the equivalent relation is given by

\[(p_1,\ldots,p_r)\sim(p_{\sigma(1)},\ldots,p_{\sigma(r)}),\]

\noindent with $\sigma \in S_r$, the symmetric group.

\label{definition of symmetric product}
\end{definition}

Elements of $\symr$ are denoted by $E=\Sigma_{i=1}^r p_i$, where $p_i$'s need not be distinct. It is a standard fact that $\symr$ is a complex manifold of complex dimension $r$. The local holomorphic coordinates are given by elementary symmetric functions. (See, for example, \cite{Gr}).

The space $\symr$ is closely related to the space of nonzero meromorphic functions on $\Sigma$ of degree $r$, or $K^*_r(\Sigma)$. Since $\Sigma$ is compact, any $f \in K_r^*(\Sigma)$ has $r$ poles and $r$ zeros, counting multiplicities. Each meromorphic function is then associated with a pair of disjoint effective divisors of degree $r$:

\[(f)=(f)_0 - (f)_\infty,\]

\noindent where $(f)_0,(f)_\infty \in \symr$. The converse statement is the classical Abel's Theorem, that a pair of disjoint effective divisors of the same degree gives rise to a meromorphic map as above if they have the same value under the Abel-Jacobi map. Precisely (cf. Chapter V of \cite{G}),

\begin{definition}[Abel-Jacobi Map]

Given $\{\omega_i\}_{i=1}^b$, the $\mathbb{C}$-basis of the space of holomorphic one forms on $\Sigma$ ,  we define

\[\mu: Div(\Sigma) \rightarrow J(\Sigma)\]

\noindent as follows. For $E=\Sigma_{i=1}^l p_i \in Div(\Sigma)$, we have

\begin{equation}
\mu(D)=\left(\begin{matrix} \Sigma_{i=1}^l \int_q^{p_i} \omega_1\\ \vdots \\\Sigma_{i=1}^l \int_q^{p_i} \omega_b\end{matrix}\right).
\label{definition of Abel Jacobi map}
\end{equation}

\noindent Here, $q$ is any point outside the support of $D$, and $J(\Sigma)$ is the Jacobian torus of $\Sigma$, or the quotient of $\mathbb{C}^b$ by the $\mathbb{Z}$-span of row vectors of the $b\times 2b$ period matrix of $\Sigma$, $(\omega_i(\gamma_j))$, where $\{\gamma_j\}$ is the basis for $H_1(\Sigma,\mathbb{Z})\simeq \mathbb{Z}^{2b}$.

\label{definition of Abel Jacobi}

\end{definition}

\noindent One notes that the quotient space eliminates the ambiguities of the definition of $\mu$ in \eqref{definition of Abel Jacobi map}.

A few well-known properties of $\mu$ are relevant for the discussions of this section (see \cite{Gr} for details):

\begin{itemize}
  \item If $E \sim E'$, that is, $E-E'=(f)$ for some $f\in K(\Sigma), \text{ then } \mu(E)=\mu(E')$.
  \item $\mu$ is holomorphic on $\symr$.
  \item $\mu^{-1}(\mu(E))=|E|:=\{F \in Div(\Sigma)\;|\;F\sim E\}$.
\end{itemize}

\noindent The third fact in particular implies that the fiber of $\mu$ at each $\mu(E)\in J(\Sigma)$ is the projectivization of the vector space

\[\mathcal{L}(E) := \{f \in K(\Sigma)\;|\;(f)+E \geq 0\}.\]

\noindent For $r>2b-2$, Riemann-Roch Theorem implies that for all $E \in \symr \subset Div(\Sigma)$, the fiber of $\mu$ at $\mu(E)$ is a projective vector space of dimension $r-b$. Therefore, $\symr$ is a vector bundle of rank $r-b$ over the Jacobian torus $J(\Sigma)$ with the projection map $\mu$.

We can now state the sufficient condition for the difference of a pair of divisors to be principal.

\begin{theorem}[Abel's Theorem]
Given a pair $(E,E') \in \sym{l} \times \sym{l}$ with $\mu(E)=\mu(E')$, there exists $f\in K^*_l(\Sigma)$ so that $(f)=E-E'$.
\label{Abel's Theorem}
\end{theorem}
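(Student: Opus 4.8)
The plan is to establish the two inclusions $|E|\subseteq \mu^{-1}(\mu(E))$ and $\mu^{-1}(\mu(E))\subseteq |E|$, the second being the real content. The first is exactly the easy property recorded above: if $E\sim E'$ then $\mu(E)=\mu(E')$. Thus the hypothesis $\mu(E)=\mu(E')$ reduces the theorem to showing that $E'$ lies in the complete linear system $|E|$. For then $E'\sim E$, so there is $f\in K(\Sigma)$ with $(f)=E-E'$; and when $E,E'$ are disjoint effective divisors of degree $l$ (common points simply cancel on both sides), this $f$ has exactly $l$ zeros and $l$ poles, i.e.\ $f\in K^*_l(\Sigma)$, as required.

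For the substantive inclusion $\mu^{-1}(\mu(E))\subseteq|E|$ I would argue infinitesimally and then globalize. Write $E=\sum_{i=1}^l p_i$ with the $p_i$ distinct, pick a local coordinate near each $p_i$, and use these as coordinates on $\sym{l}$. A direct computation of $d\mu$ from \eqref{definition of Abel Jacobi map} shows that, in the basis $\{\omega_1,\dots,\omega_b\}$, the differential is the $b\times l$ matrix $\big(\omega_j(p_i)\big)$ of coefficients of the one-forms in the chosen coordinates. Hence $\ker d\mu_E$ consists of the tangent directions annihilated by every holomorphic one-form, which by the geometric form of the Riemann--Roch theorem is precisely the tangent space to $|E|=\pp(\cL(E))$; equivalently $\operatorname{rank} d\mu_E=b-i(E)$ and $\dim|E|=l-\operatorname{rank} d\mu_E$, where $i(E)$ is the index of speciality. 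Therefore, near $E$, the fiber $\mu^{-1}(\mu(E))$ is a smooth submanifold of dimension $\dim|E|$ agreeing with $|E|$ in a neighborhood of $E$.

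To pass from local to global I would use that $\sym{l}$ is compact, so $\mu$ is proper and its fibers are compact analytic subvarieties. Each fiber contains the connected projective space $|E|$ by the easy direction, and the infinitesimal analysis shows every component has dimension $\dim|E|$ and coincides with $|E|$ wherever the two meet; since $|E|$ is irreducible and the fiber admits no component of strictly larger dimension, properness forces $\mu^{-1}(\mu(E))=|E|$. This is exactly the third property listed above, and the function $f$ then follows as in the first paragraph.

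The main obstacle is the global step: the rank computation identifies the fiber with $|E|$ only near reduced divisors, and one must rule out extra components meeting the diagonal loci where points of $E$ collide, together with a connectedness argument. A cleaner alternative that sidesteps this is to identify $J(\Sigma)$ with $\Pic^0(\Sigma)=H^1(\Sigma,\mathcal{O})/H^1(\Sigma,\ZZ)$ via the exponential sheaf sequence and to verify that $\mu(E-E')$ equals the class $[\mathcal{O}(E-E')]$; then $\mu(E)=\mu(E')$ says this class is trivial, which by definition means $E-E'$ is principal. In that route the single nontrivial point is the cocycle comparison identifying this topologically defined map with the transcendental $\mu$ of \eqref{definition of Abel Jacobi map}.
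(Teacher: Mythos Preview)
The paper does not prove Abel's Theorem at all: it is stated as a classical fact, with the surrounding discussion referring the reader to \cite{Gr}. In fact the third bullet point listed just before the statement, $\mu^{-1}(\mu(E))=|E|$, is already the hard direction of Abel's Theorem; once that is accepted (as the paper does, citing Griffiths), the theorem is immediate: $\mu(E)=\mu(E')$ gives $E'\in|E|$, hence $E'\sim E$, hence $(f)=E-E'$ for some $f$. So there is nothing in the paper to compare your argument against beyond this one-line deduction from an imported fact.

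Your sketch goes much further and outlines a genuine proof of that imported fact. The infinitesimal computation of $d\mu$ and the identification of its kernel with $T_E|E|$ via Riemann--Roch is the standard opening move (this is essentially Abel's theorem in its differential form, as in Griffiths--Harris or ACGH). You are right to flag the globalization as the weak point: the rank argument only controls the fiber near generic (reduced) $E$, and ruling out extra components supported on the diagonals requires more, typically an irreducibility or connectedness statement for the fibers that is not automatic from properness alone. Your alternative via the exponential sequence and $\Pic^0(\Sigma)\simeq J(\Sigma)$ is the cleaner route and is also standard; the ``single nontrivial point'' you identify---matching the analytically defined $\mu$ with the sheaf-cohomological class map---is exactly where the work lies, and is usually done by an explicit \v{C}ech cocycle computation. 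Either approach is sound as an outline, but both exceed what the paper itself supplies.
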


\noindent The map $f$ is unique up to multiplication of a non-vanishing holomorphic map. For compact Riemann surface, it is then unique up to a nonzero multiple.

We are now equipped with sufficient language to describe the space $\Hol$. In \cite{KM}, a complete description was provided, without explicit proof, for the space of based holomorphic maps from $\Sigma$ to $\pk$.

\begin{definition}
Given $p_0 \in \Sigma$ and $q_0 \in \pk$, we define the space of based holomorphic maps, $\Hol^*$, to be the subset of $\Hol$ of all maps sending $p_0$ to $q_0$.
\label{definition of Hol based}
\end{definition}

\begin{proposition}
The space $\Hol^*$ can be identified with an open set subset of $k+1$ direct product of $\symr$ given by

\[H_{r,k}^* := \{(E_0,\ldots,E_{k}) \in \symr \times \cdots \times \symr\;|\;\cap_j E_j = \emptyset \text{ and }\mu(E_0)=\mu(E_j)\;\forall\;j\}.\]

\label{description of based holomorphic maps}
\end{proposition}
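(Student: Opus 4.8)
The plan is to establish the identification by producing an explicit correspondence between based holomorphic maps $\phi \in \Hol^*$ and tuples $(E_0,\ldots,E_k) \in H^*_{r,k}$, and then to verify that this correspondence is a bijection onto an open subset. First I would recall that a holomorphic map $\phi: \Sigma \to \pk$ is given in homogeneous coordinates by $\phi = [f_0 : \cdots : f_k]$, where the $f_j$ are sections of a common holomorphic line bundle $L$ of degree $r$ over $\Sigma$ with no common zeros; the degree of $\phi$ being $r$ forces $\deg L = r$. Fixing a meromorphic reference section (equivalently trivializing $L$ rationally) lets me regard each $f_j$ as a meromorphic function whose zero divisor $E_j = (f_j)_0 \in \symr$ records the $j$th coordinate, and the common zero divisor $(f_j)_\infty$ is fixed across all $j$ since the $f_j$ are sections of the same bundle. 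The non-degeneracy/coprimality condition that the map is well-defined into $\pk$ translates exactly into $\bigcap_j E_j = \emptyset$.

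Next I would pin down the constraint $\mu(E_0) = \mu(E_j)$. Since $f_0$ and $f_j$ are sections of the same line bundle $L$, their quotient $f_j/f_0$ is a genuine meromorphic function on $\Sigma$, and its divisor is $E_j - E_0$ (the common pole divisors cancel). By the forward direction of Abel's Theorem (Theorem \ref{Abel's Theorem}, together with the first listed property of $\mu$), the existence of such a meromorphic function forces $\mu(E_j) = \mu(E_0)$. Conversely, given a tuple in $H^*_{r,k}$, the hypothesis $\mu(E_0) = \mu(E_j)$ lets me invoke Abel's Theorem to produce meromorphic functions $g_j$ with $(g_j) = E_j - E_0$, unique up to nonzero scalar by the remark following the theorem; setting $\phi = [1 : g_1 : \cdots : g_k]$ and clearing the common pole divisor recovers a holomorphic map whose coordinate zero divisors are the prescribed $E_j$, with $\bigcap_j E_j = \emptyset$ guaranteeing it lands in $\pk$. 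The based condition $\phi(p_0) = q_0$ is what cuts $\Hol$ down to $\Hol^*$ and will correspond to a normalization of the scalar ambiguities after a change of homogeneous coordinates sending $q_0$ to a standard point; I would note that this basing is precisely what removes the overall $\PP$-worth of scaling freedom so that the correspondence becomes a genuine bijection rather than a projectivized one.

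To finish, I would check that the assignment $\phi \mapsto (E_0,\ldots,E_k)$ and its inverse are holomorphic and that the image is open. Openness follows because the conditions defining $H^*_{r,k}$ are an equality condition $\mu(E_0)=\mu(E_j)$ (which carves out the relevant closed subvariety on which we work) together with the genuinely open condition $\bigcap_j E_j = \emptyset$; once we restrict attention to tuples satisfying the Abel-Jacobi equality, the disjointness condition defines an open subset. Holomorphicity in both directions follows from the holomorphicity of $\mu$ on $\symr$ and from the local description of $\symr$ via elementary symmetric functions, with the inverse map's holomorphic dependence coming from the fact that the section $g_j$ produced by Abel's Theorem can be written down explicitly (via theta functions or prime forms) as a holomorphic function of the divisor data.

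I expect the main obstacle to be the careful bookkeeping of the scalar and coordinate ambiguities: the meromorphic functions $g_j$ are each defined only up to a nonzero constant, and the homogeneous coordinates of $\phi$ are defined only up to an overall $\mathrm{PGL}(k+1)$-choice and a global scaling. Showing that the basing condition $\phi(p_0)=q_0$ rigidifies exactly the right amount of this freedom to yield a well-defined bijection, and that it does so compatibly with the holomorphic structures, is the delicate point; the cleaner geometric statement is that $\phi$ is recovered as the projectivization of the map $\Sigma \to \CC^{k+1}$ assembled from a common-denominator clearing of $(1, g_1, \ldots, g_k)$, and one must verify this is independent of all the auxiliary choices up to the residual scaling fixed by the base point.
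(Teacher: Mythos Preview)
Your proposal is correct and follows essentially the same two-step strategy as the paper: extract from $\phi$ a tuple of degree-$r$ divisors satisfying the Abel--Jacobi constraint, and conversely rebuild $\phi$ from such a tuple via Abel's Theorem, with the basing condition used to kill the $(\CC^*)^k$ worth of scalar ambiguity in the $g_j$.

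The one genuine difference is in how the divisors $E_j$ are extracted in the forward direction. You take the clean route: $E_j$ is the zero divisor of the $j$th homogeneous coordinate section, i.e.\ $\phi^{-1}(\{z_j=0\})$, which is manifestly in $\symr$ since $\phi^*\mathcal O(1)$ has degree $r$, and linear equivalence of the $E_j$ is immediate because the $f_j$ are sections of the same bundle. The paper instead first intersects the image curve $f(\Sigma)$ with a fixed hyperplane, uses the intersection points to adapt the homogeneous coordinates of $\pk$, and then defines $E_j$ as $f^{-1}(q_j)$ for the resulting coordinate points $q_j$; the linear equivalence is then read off from the meromorphic functions $\pi_{j0}\circ f = z_j/z_0$. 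Your approach avoids this coordinate-adaptation step entirely and makes the degree-$r$ and linear-equivalence claims more transparent; the paper's route has the minor advantage that the disjointness $\cap_j E_j=\emptyset$ is visibly forced by the fact that the $q_j$ are distinct points of $\pk$. For the inverse direction and the role of the base point the two arguments are identical.
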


\begin{proof}
Given $f \in \Hol^*$, since $f(\Sigma)$ is nondegenerate, it is a curve of degree $k$ in $\pk$. Let $H_1$ be the hyperplane of $\pk$ defined by the canonical section $s_1$ of $\mathcal{O}(1)$. In local coordinate, points of $H_1$ is of the form $[0:z_1:\ldots:z_k]$. Then, counting multiplicities, we have

\[H_1 \bigcap f(\Sigma) = \{q_1,\ldots,q_k\}.\]

\noindent Since $f$ is nondegenerate, $q_1,\ldots,q_k$ are in general position in $\mathbb{CP}^{k-1}$, and we may define coordinate patches of $\pk$ so that each $q_j$ has homogeneous coordinate $[\delta_{ij}]_{i=0}^k$, $1\leq j \leq k$.

For each $j \in \{1,\ldots,k\}$, let $E_j := f^{-1}(q_j)$. Since $f$ is of degree $r$, it follows that $E_j \in \symr$. For $j=0$, we let $E_0:=f^{-1}([1:0:\ldots:0])\in\symr$. We must show that the tuple $(E_0,\ldots,E_k)$ satisfies the condition of the open set given above. It is clear that $E_0 \cap \left(\cap_{j=1}^k E_j\right) = \emptyset$ since $f(E_j) \subset H_1$ for all $j \geq 1$ and therefore $\cap_{j=0}^k E_j = \emptyset$. For the level set conditions, it suffices to observe that all divisors given above are linear equivalent. Indeed, define $\pi_{ji} \in K(\pk)$ by

\[\pi_{ji}([z_0:\ldots:z_k]):= \frac{z_j}{z_i}.\]

\noindent Each $\pi_{ji}$ is globally defined on $\pk$ and therefore $\pi_{ji}\circ f \in K^*(\Sigma)$. Moreover, with the choice of coordinates associated to $q_j$'s above, it is clear that for all $j$,

\[(\pi_{j0}\circ f)=E_0 - E_j.\]

\noindent It then follows that all divisors chosen above are linearly equivalent.

Conversely, given $(E_0,\ldots,E_k)$ satisfying the conditions of the open subset. Abel's Theorem ensures that for each $j\geq 1$, there exists $f_j \in K^*_r(\Sigma)$ such that $(f_j)=E_j-E_0$, unique up to a nonzero multiple $\lambda_j \in \mathbb{C}^*$. Since $f$ is a based map, the constant multiples $\lambda_j$ are completely determined. We define the corresponding map $f$ coordinate-wise by

\[f := [1:f_1:\ldots:f_k].\]

\noindent This map is holomorphic away from $E_0$. However, near each point of $E_0$, we may holomorphically extend $f$ across the singularities by multiplying the poles of minimum order of vanishing (See \cite{Mi} for details), obtaining a holomorphic map from $\Sigma$ to $\pk$. $f$ is clearly of degree $r$, since $f^{-1}([0:1:\ldots :1])=E_0 \in \symr$. The two associations constitute a one-to-one correspondence between $\Hol^*$ and $H_{r,k}^*$.

\end{proof}

Lifting the restriction on based maps corresponds to the freedom of choosing $(\lambda_j)_{j=1}^k \in (\mathbb{C}^*)^{\times k}$ and we have established

\begin{theorem}
$\Hol$ is diffeomorphic to $\Hol^* \times  \pk$, where $\Hol^*$ is defined in Proposition \ref{description of based holomorphic maps}.
\label{description of Hol}
\end{theorem}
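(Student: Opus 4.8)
The plan is to build the diffeomorphism $\Hol \cong \Hol^* \times \pk$ by exhibiting an explicit $\pk$-action (or fibration) on $\Hol$ whose orbit space recovers the based maps. The key observation is that $\pk = PGL$-orbit is not quite right; rather, the freedom lost by the basing condition is the choice of image point $f(p_0) \in \pk$, so I would first construct the evaluation map $\mathrm{ev}\colon \Hol \to \pk$, $f \mapsto f(p_0)$, and show it is a smooth surjective submersion whose fibers are all diffeomorphic to $\Hol^*$. Concretely, fixing $p_0$, the fiber $\mathrm{ev}^{-1}(q_0)$ is by definition exactly $\Hol^*$ from Definition \ref{definition of Hol based}, so the task reduces to trivializing this fibration.

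First I would make precise the parameter count suggested by the excerpt's remark that ``lifting the restriction on based maps corresponds to the freedom of choosing $(\lambda_j)_{j=1}^k \in (\CC^*)^{\times k}$.'' In the coordinate description of Proposition \ref{description of based holomorphic maps}, a general (unbased) map is recovered from a tuple $(E_0,\dots,E_k)$ together with a choice of constants $\lambda_j$ scaling each $f_j$; the assignment $f = [1:\lambda_1 f_1 : \cdots : \lambda_k f_k]$ shows that the unbased maps over a fixed tuple form a $(\CC^*)^{\times k}$-worth of data modulo the overall scaling already accounted for. I would then identify this $(\CC^*)^{\times k}$ parameter space, after projectivization and compactification by the extension-across-poles argument already used in the Proposition, with $\pk$ itself: the tuple $[\lambda_0 : \lambda_1 : \cdots : \lambda_k] \in \pk$ (allowing $\lambda_0$, i.e. the coefficient of the constant section, to vary) parametrizes precisely the choice of where $p_0$ is sent, which is the value $f(p_0) = [\lambda_0 : \lambda_1 f_1(p_0) : \cdots]$. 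The map $(E_\bullet, [\lambda]) \mapsto f$ is the candidate inverse to $(\mathrm{proj}, \mathrm{ev})$.

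Next I would verify that the resulting bijection $\Hol \to \Hol^* \times \pk$ is a diffeomorphism, not merely a set bijection. Since $\Hol^*$ is identified with an open subset $H^*_{r,k}$ of a product of complex manifolds $\symr$, and $\pk$ is a complex manifold, the target is a smooth (indeed complex) manifold. Smoothness of the forward map follows because evaluation $f \mapsto f(p_0)$ and the passage $f \mapsto (E_0,\dots,E_k)$ via preimages of the fixed hyperplanes depend holomorphically on $f$ (the divisors $E_j = f^{-1}(q_j)$ vary holomorphically, as pullbacks of sections of $\mathcal{O}(1)$); smoothness of the inverse follows from the explicit formula $f = [\lambda_0 : \lambda_1 f_1 : \cdots : \lambda_k f_k]$, together with the holomorphic dependence of the $f_j$ on the divisors $E_j - E_0$ guaranteed by Abel's Theorem \ref{Abel's Theorem} (uniqueness up to scale makes the choice of $f_j$ a well-defined holomorphic section once a local normalization is fixed).

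The main obstacle I expect is the global triviality of the fibration rather than the local statement: a priori $\mathrm{ev}\colon \Hol \to \pk$ is only a fiber bundle with fiber $\Hol^*$, and producing a genuine global product decomposition requires a global section or a global trivializing $\pk$-action. The clean way around this is to promote the fiberwise $\pk$ of scaling-parameters to an honest action: because the target $\pk$ carries the transitive action of $PGL(k+1,\CC)$ and the scaling tuple $[\lambda_0:\cdots:\lambda_k]$ transforms equivariantly under postcomposition of $f$ with the diagonal torus (and its compactification), the evaluation map is equivariant and the bundle is trivialized by transporting a single fiber $\mathrm{ev}^{-1}(q_0) = \Hol^*$ around via this action. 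I would therefore state the diffeomorphism as $\Hol \ni f \mapsto \big(g^{-1}\!\cdot f,\ f(p_0)\big) \in \Hol^* \times \pk$, where $g \in PGL(k+1)$ is any element carrying $q_0$ to $f(p_0)$ chosen continuously, and note that the coordinate description of Proposition \ref{description of based holomorphic maps} shows $g^{-1}\!\cdot f$ lands in $\Hol^*$ while depending smoothly on $f$; checking that the choice of $g$ can be made globally smooth (equivalently, that $\pk$ admits a smooth family of group elements realizing the transitivity, which it does via a local section of $PGL(k+1)\to \pk$ patched over the standard affine cover) is the one point demanding genuine care.
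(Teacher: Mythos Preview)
Your second and third paragraphs already contain the paper's argument: once you have the divisor tuple $(E_0,\dots,E_k)\in H^*_{r,k}$ and the projective class of scalings $[\lambda_0:\cdots:\lambda_k]$, the assignment $(E_\bullet,[\lambda])\mapsto [\lambda_0:\lambda_1 f_1:\cdots:\lambda_k f_k]$ is a \emph{globally} defined smooth map with globally defined smooth inverse $f\mapsto \big((f^{-1}(q_j))_j,\ f(p_0)\big)$. That is exactly the paper's (very brief) proof, and it already yields the product decomposition without any need to trivialize a fibration. Your worry in the final paragraph that ``global triviality'' is the main obstacle is misplaced: the scaling--parameter description is not a local trivialization of the evaluation bundle but an explicit global diffeomorphism.

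The genuine gap is in your proposed fix for that (unnecessary) obstacle. You assert that a smooth choice of $g\in PGL(k+1,\CC)$ carrying $q_0$ to each $q\in\pk$ can be made globally, ``via a local section of $PGL(k+1)\to\pk$ patched over the standard affine cover.'' This is false: a global smooth section of $PGL(k+1,\CC)\to\pk$ does not exist. Already for $k=1$ the bundle $PGL(2,\CC)\to\CC P^1$ deformation retracts to $SO(3)\to S^2$, the unit tangent bundle of the $2$-sphere, which has no continuous section by the hairy ball theorem (equivalently, nonvanishing Euler class). Local sections over the affine charts exist, but they cannot be patched to a global one; the transition functions are the obstruction. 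So the ``$g^{-1}\cdot f$'' trivialization you sketch cannot be carried out as written. Drop that paragraph entirely and rely on the explicit $(E_\bullet,[\lambda])$ correspondence you already wrote down---that is both correct and what the paper does.
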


\begin{proof}
In the proof of Proposition \ref{description of based holomorphic maps}, the function $f$ defined by $(E_j)\in \Hol^*$ is unique up to a choice of $k+1$ nonzero constants (including 1). Moreover, the choice of these $k+1$ constants define the same $f$ if they lie on the same line in $\mathbb{C}^{k+1}$.
\end{proof}

The complex dimension of $\Hol$ can be readily computed.

\begin{lemma}
$\Hol$ is a complex manifold of complex dimension
\[(k+1)r-k(b-1).\]
\label{dimension of Hol}
\end{lemma}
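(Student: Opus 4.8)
The plan is to extract the dimension from the two structures already established in this section: the diffeomorphism $\Hol \cong \Hol^* \times \pk$ of Theorem~\ref{description of Hol}, and the projection $\mu \colon \symr \to J(\Sigma)$ exhibiting $\symr$ as a rank $r-b$ vector bundle over the $b$-dimensional Jacobian (valid here since $r > 2b-2$).

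First I would reduce to $\Hol^*$. Since $\pk$ has complex dimension $k$, the product decomposition gives $\dim_\CC \Hol = \dim_\CC \Hol^* + k$, so it suffices to show $\dim_\CC \Hol^* = (k+1)r - kb$. By Proposition~\ref{description of based holomorphic maps}, $\Hol^*$ is an open subset of $H_{r,k}^*$ and hence has the same dimension; moreover the disjointness condition $\cap_j E_j = \emptyset$ is open, so discarding it does not affect the dimension. Thus I may work with the locus cut out by the level-set conditions $\mu(E_0) = \mu(E_j)$ alone.

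Next I would identify this locus as the $(k+1)$-fold fiber product
\[\symr \times_{J(\Sigma)} \cdots \times_{J(\Sigma)} \symr\]
of $\symr$ with itself over $J(\Sigma)$ along $\mu$: a tuple $(E_0,\ldots,E_k)$ satisfies $\mu(E_0) = \cdots = \mu(E_k)$ exactly when all of its entries lie in a common fiber of $\mu$. Because $\mu$ is a holomorphic submersion — it is precisely the projection of the vector bundle structure recalled above — this fiber product is a smooth complex manifold, which simultaneously settles the manifold assertion for $\Hol$. Viewing it as a bundle over $J(\Sigma)$ whose fiber over a point is $\bigl(\mu^{-1}(\text{pt})\bigr)^{\times(k+1)} = (\pp^{r-b})^{\times(k+1)}$, I obtain
\[\dim_\CC = \dim_\CC J(\Sigma) + (k+1)\dim_\CC \pp^{r-b} = b + (k+1)(r-b) = (k+1)r - kb.\]
Adding the factor $k$ contributed by $\pk$ then yields $(k+1)r - kb + k = (k+1)r - k(b-1)$, as claimed.

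The only genuine subtlety, and the step I would treat most carefully, is the smoothness of the fiber product and the resulting dimension count: this is exactly where the submersivity of $\mu$ (equivalently, the vector bundle structure of $\symr$ over $J(\Sigma)$, available for $r > 2b-2$) is indispensable, since without this transversality the common level set could fail to be a manifold of the expected dimension. Everything else is elementary bookkeeping of dimensions.
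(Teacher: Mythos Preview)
Your argument is correct, and its core content---the submersivity of $\mu$ under the hypothesis $r>2b-2$, the reduction to $\Hol^*$, and the resulting dimension count---coincides with the paper's. The packaging differs slightly: the paper writes the level-set condition as the preimage of $0$ under the difference map
\[
s\colon (\symr)^{\times(k+1)} \longrightarrow J(\Sigma)^{\times k},\qquad (E_0,\ldots,E_k)\longmapsto\bigl(\mu(E_1)-\mu(E_0),\ldots,\mu(E_k)-\mu(E_0)\bigr),
\]
and checks that each component has rank $b$ (via Riemann--Roch on the fibers of $\mu$), yielding codimension $kb$ inside $(\symr)^{\times(k+1)}$. You instead recognise the locus directly as the $(k{+}1)$-fold fiber product of $\symr$ over $J(\Sigma)$ and read off the dimension from the already-established $\pp^{r-b}$-bundle structure. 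Your route is a bit cleaner, since it avoids the separate verification that the $k$ components of $s$ contribute independently to the rank; the paper's route has the minor advantage of making explicit exactly which equations cut out $\Hol^*$. Both arguments are equivalent expressions of the same transversality, and either would be acceptable here.
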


\begin{proof}
It suffices to show that the space of based maps $\Hol^*$ is of dimension $(k+1)r-kb$, since the remaining summand in the conclusion of Theorem \ref{description of Hol} contributes $k$ to the dimension. Note that $\Hol^*$ is the submanifold of a $(k+1)r$-dimensional manifold defined by the linear map $s: (\symr)^{\times (k+1)} \to (J(\Sigma))^k$ given as,

\[s(E_0,E_1,\ldots,E_k):= (\mu(E_1)-\mu(E_0),\ldots,\mu(E_k)-\mu(E_0)).\]

\noindent It suffices to show that $s$ is of rank $kb$ at every point on the inverse image of $0$. Indeed, consider the linear map $\Delta : \symr \times \symr \to J(\Sigma)$ defined by

\[\Delta(E,E')=\mu(E)-\mu(E').\]

\noindent From the third property of the Abel-Jacobi map listed immediately after Definition \ref{definition of Abel Jacobi}, we see that the inverse image of 0 is precisely

\[\{(E,E')\;|\;E' \in |E|\},\]

\noindent which is of dimension $r+\dim |E|=2r-b$ by Riemann-Roch Theorem (note again $r>2-2b$). The rank of $\Delta$ is then $2r-(2r-b)=b$ and therefore each component of $s$ contributes rank $b$ to the entire map. This completes the proof.

\end{proof}

The space $\Hol$ is not compact and is potentially problematic when discussing convergence of integrals over it. As such, we introduce the well known Uhlenbeck compactification of $\Hol$ and will later discuss the convergence of integrals over this compact set. The following description is summarized from \cite{BDW}:

\begin{definition}
The \emph{Uhlenbeck  compactification} of $\Hol$ is defined by

\[\cHol :=  \bigsqcup_{l=0}^r \left[Sym^l \Sigma \times \mathcal{H}_{r-l,k}\right],\]

\noindent where $\mathcal{H}_{r-l,k}$ is the space of degree $r-l$ holomorphic maps from $\Sigma$ to $\pk$.

\label{definition of Uhlenbeck compactification}
\end{definition}

\noindent The topology of $\cHol$ is given by weak* topology on $Sym^l\Sigma$ and $\mathcal{C}_0^\infty$ topology on $\mathcal{H}_{r-l,k}$. Precisely, for each $(E,f)\in\cHol$, we associate the distribution $e(f)+\delta_E$, where $e(f)=|df|^2$ the energy density and $\delta_E$ is the Dirac distribution supported in $E$. Around $(E,f)$, there are two local bases of neighbourhoods of topologies: one basis $W$ of $e(f)+\delta_E$ in the weak* topology and one basis $N$ of $f$ in $C_0^\infty(\Sigma \ E)$. Collecting them together, we form a basis of neighbourhoods

\[V(E,f)=\{(E',f')\in \cHol\;|\;E'\in W \text{ and }f,\in N \}.\]

\noindent  In \cite{SU}, it was shown that $\cHol$ with this topology is compact. Since $\Hol$ is open and dense subset of $\cHol$, it is indeed a smooth compactification of $\Hol$. In our main result, we will derive the convergence statement of integrals on $\cHol$ and ensure that the complement does not contribute to the integral. One notes that for $l=0$, this topology degenerates into $C_0^\infty$ topology of functions, which coincide with the subspace topology inherited from the direct sums of symmetric products. Indeed, the locations of zeros determines the smooth topology as well as the energy densities. Moreover, the smooth structure of symmetric spaces provides a natural smooth structure for $\cHol$, namely, the $\mathcal{C}^\infty$ topology controlled by the locations of the zeros.

We may describe $\cHol$ with explicit coordinates. We observe that $\cHol$  is a closed submanifold  of $(\symr)^{\times (k+1)} \times \pk$. The component from symmetric products determines the zeros from $k+1$ meromorphic functions on $\Sigma$, a total of $k+1$ unordered $r$-tuples of points from $\Sigma$. With the $k$ constraint equations by Abel-Jacobi maps of rank $b$, it follows that there are $m=r(k+1)-kb$ points from $(\symr)^{\times (k+1)}$ that determine the correct zeros of each meromorphic function up to a free choice of point in $\pk$. $\cHol$ therefore possesses local coordinate description

\begin{equation}
(w_1,\ldots w_m,w_{m+1},\ldots,w_m),
\label{coordinate of Hol}
\end{equation}

\noindent where the coordinates  $(w_1,\ldots w_m)$ is the local coordinate of the submanifold of $(\symr)^{\times (k+1)}$ defined by Abel-Jacobi equations and $(w_{m+1},\ldots,w_m)$ is the local coordinate for $\pk$. Here, $m=(k+1)r-k(b-1)$ as in Lemma \ref{dimension of Hol}. \eqref{coordinate of Hol} is also a local coordinate for the open subset $\Hol$, with the additional open condition that $(w_1,\ldots w_m)$ does not yield meromorphic functions with common zeros or poles.

Lastly, we define the $L^2$ metrics on $\Hol$ with respect to the K\"ahler metric $\omega$ on $\Sigma$ and Fubini-Study metric on $\pk$. Given $f \in \Hol$, the tangent space of $\Hol$ at $f$ can be identified with the space of sections of the pullback bundle of $T \mathbb{CP}^{k-1}$ via $f$:

\[T_f\Hol \simeq \Gamma (f^* T\mathbb{CP}^{k-1}).\]

\noindent Given $u,v \in T_f \Hol$,  they can be viewed as a pullbacked sections on $\Sigma$, which can be pushed forward by $f$ to be tangent vectors on $\mathbb{CP}^{k-1}$, on which Fubini-Study metric can be applied. We define

\begin{definition} [$L^2$ Metrics on $\Hol$]

The $L^2$ metric on $\Hol$ is given by

\begin{equation}
\left<u,v\right>_{L^2}:=\int_\Sigma \left<f_*u,f_*v\right>_{FS}vol_\Sigma.
\label{definition of L 2 metric on maps}
\end{equation}

\noindent Here, the $f_*$ denotes the pushforward of $f$.

\end{definition}

\noindent With respect to this metric, we aim to verify the conjectured formula for the volume of $\Hol$.

\section{Vortex Equations and Its Moduli Spaces}

We hereby provide a brief summary of the $s$-vortex equations and its moduli spaces. Standard texts and papers for this subject include \cite{JT}, \cite{S}, \cite{Ro}, \cite{M} (of more physical interest), \cite{Br}, \cite{Br1}, \cite{G}, and \cite{Ba} (of more mathematical aspects). The conclusion below is drawn directly from \cite{L}, which contains a comprehensive summary the standard texts.

On the same Riemann surface defined above, we consider a Hermitian line bundle $(L,H)$ with degree $r>2-2b$ possessing smooth global sections.  For each $k \in \mathbb{N}$ and $s\in \mathbb{R}$, the $s$-vortex equations are defined on $(D,\phi_0,\ldots,\phi_k) \in \mathcal{A}(H)\times (\Omega^0(L))^{\times(k+1)}$, the product of the space of $H$-unitary connections and $k+1$ tuples of smooth sections, we have the $s-$vortex equations

\begin{equation}
\begin{cases}
F_{D}^{(0,2)}=0\\
D^{(0,1)}\phi=0\\
\sqrt{-1}\Lambda F_{D}+\frac{s^{2}}{2}(\sum_{i=0}^k|\phi_i|^2_H-1)=0.
\end{cases}\label{s-vortex}
\end{equation}

\noindent Here, $F_D$ is the curvature form of the connection $D$. We use $\phi$ to denote a $(k+1)$-tuple of sections $(\phi_i)_{i=0}^k$ and the second equation is an abbreviation that $D^{(0,1)} \phi_i =0 \;\forall i$. Via Bogomolny-type arguments, the set of equations are precisely the minimizing equations the $s$-Yang-Mills-Higgs energy functional:

\begin{equation}
YMH_{\tau.s}(D,\phi):=\frac{1}{s^{2}}\left|\left|F_{D}\right|\right|_{L^{2}}^{2}+\sum_{i=1}^k\left|\left|D\phi_i\right|\right|_{L^{2}}^{2}+\frac{s^{2}}{4}\left|\left|\sum_{i=1}^k|\phi_i|^2_H-1 \right|\right|_{L^{2}}^{2}.\label{YMH}
\end{equation}

The solution space to \eqref{s-vortex} is invariant under unitary gauge action, and we let $\nu_{k+1}(s)$ be the $U(1)$ gauge classes of solutions to \eqref{s-vortex}. Elements of this space are called vortices. For a line bundle $L$, it is well known that the stability condition, or the non-emptyness of $\nu_{k+1}(s)$, is simply

\[s^2 \geq \frac{4\pi r}{\text{vol }\Sigma}.\]

\noindent Note that we intentionally define $k+1$ sections for vortex equations to accommodate the main theme of this article, the space of holomorphic maps to $\pk$. It has been noted in \cite{BDW} and \cite{Ba} that $\Hol$ is diffeomorphic to the open subset of $\nu_{k+1}(s)$, denoted by $\nu_{k+1,0}(s)$, consisting of vortices with $k+1$ having no common zeros:

\[\nu_{k+1,0}(s):= \{[D,(\phi_i)_{i=0}^k]\in \nu_{k+1}(s)\;|\;\cap_i \phi_i^{-1}(0)= \emptyset\}.\]

\noindent We have,

\begin{theorem}
For all $s$ such that $s^2 \in [\frac{4\pi r}{\text{Vol}(\Sigma)}, \infty)$, there is a diffeomorphism

\[\Phi_s: \Hol \to \nu_{k+1,0}(s).\]

\label{moduli space correspondence}
\end{theorem}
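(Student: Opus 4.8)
The plan is to exhibit $\Phi_s$ as the inverse of a geometrically transparent map $\Psi_s\colon \nu_{k+1,0}(s)\to\Hol$, deferring all analytic difficulty to the construction of that inverse, which amounts to solving the third equation of \eqref{s-vortex} inside each complex gauge orbit.

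First I would build $\Psi_s$. On a Riemann surface there are no $(0,2)$-forms, so the equation $F_D^{(0,2)}=0$ is automatic, while $D^{(0,1)}\phi_i=0$ says precisely that $D$ endows $L$ with a holomorphic structure for which each $\phi_i$ is a holomorphic section. Given a representative $(D,(\phi_i)_{i=0}^k)$ of a class in $\nu_{k+1,0}(s)$, the no-common-zeros condition makes
\[ f := [\phi_0:\cdots:\phi_k]\colon \Sigma\to\pk \]
a well-defined holomorphic map, with $f^*\mathcal{O}(1)\cong L$ identifying $f^*s_i$ with $\phi_i$; hence $f$ has degree $\deg L = r$ and $f\in\Hol$. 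A unitary gauge transformation $g\colon\Sigma\to U(1)$ acts by $\phi_i\mapsto g\phi_i$ and leaves $f$ unchanged, so $\Psi_s([D,(\phi_i)]):=f$ is well defined on gauge classes, and it is smooth because it is the fixed holomorphic operation $[\,\cdot\,]$ applied to the defining sections.

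The substance of the proof is the construction of the inverse. Given $f\in\Hol$, put $L:=f^*\mathcal{O}(1)$ with its induced holomorphic structure and $\phi_i:=f^*s_i$; these are holomorphic, have no common zero, and $\deg L=r$. It remains to produce a Hermitian metric whose Chern connection $D$ solves the third equation in \eqref{s-vortex}. Fixing a background metric $H_0$ and writing $H=e^{2u}H_0$, one has $|\phi|^2_H=e^{2u}|\phi|^2_{H_0}$ with $|\phi|^2_{H_0}=\sum_{i=0}^k|\phi_i|^2_{H_0}$, and the equation reduces to the scalar Kazdan--Warner equation
\[ \Delta u = \frac{s^2}{2}\bigl(|\phi|^2_{H_0}\,e^{2u}-1\bigr)+\rho_0,\qquad \rho_0:=\sqrt{-1}\,\Lambda F_{H_0}, \]
where $\Delta$ is the (nonpositive) Laplacian of $\omega$. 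The hard part is existence and uniqueness of $u$: existence follows either by minimizing the strictly convex, coercive functional \eqref{YMH} along the complex gauge orbit or by the method of sub- and super-solutions, and the hypothesis $s^2\geq 4\pi r/\Vol(\Sigma)$ is exactly what makes this possible. Indeed, integrating the displayed equation over $\Sigma$ and using $\int_\Sigma\rho_0\,vol_\Sigma=2\pi r$ (Chern--Weil) gives $\frac{s^2}{2}\Vol(\Sigma)=2\pi r+\frac{s^2}{2}\int_\Sigma|\phi|^2_{H_0}e^{2u}\,vol_\Sigma>2\pi r$, i.e. the Bradlow stability bound. Uniqueness of $u$, together with the fact that the residual scaling $\phi_i\mapsto\lambda\phi_i$ ($\lambda\in\mathbb{C}^*$) is absorbed by $u\mapsto u-\log|\lambda|$ and a unitary gauge, makes $\Phi_s(f):=[D,(\phi_i)]$ a well-defined element of $\nu_{k+1,0}(s)$, set-theoretically inverse to $\Psi_s$.

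Finally I would upgrade this bijection to a diffeomorphism. Both spaces are finite-dimensional manifolds: $\Hol$ by Theorem \ref{description of Hol} and Lemma \ref{dimension of Hol}, and $\nu_{k+1,0}(s)$ as the smooth locus of the vortex moduli space. Smoothness of $\Phi_s$ follows from the implicit function theorem applied to $G(u,f):=\Delta u-\frac{s^2}{2}(|\phi|^2_{H_0}e^{2u}-1)-\rho_0$: its linearization in $u$ is the self-adjoint operator $\Delta-s^2|\phi|^2_{H_0}e^{2u}$, which is negative definite and hence invertible, so the solution $u$ depends smoothly on the holomorphic data $(\phi_i)$ and thus on $f$. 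Since $\Psi_s$ is manifestly smooth and $\Phi_s=\Psi_s^{-1}$, the map $\Phi_s$ is a diffeomorphism. The only genuine obstacle is the nonlinear elliptic step of the third paragraph; once the stability bound $s^2\geq 4\pi r/\Vol(\Sigma)$ is in force it is the classical Bradlow--Garc\'ia-Prada existence theorem.
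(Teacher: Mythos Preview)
Your proposal is correct and follows essentially the same approach as the paper's sketch: projectivize the sections to go from vortices to $\Hol$, and in the reverse direction pull back $\mathcal{O}(1)$ and its canonical sections, then solve the Kazdan--Warner equation \eqref{s-vortex 1} within the complex gauge orbit to hit the third vortex equation, invoking Bradlow/Garc\'ia-Prada (and \cite{L} for general $s$) for the elliptic step. The only cosmetic differences are that the paper fixes the background metric to be the pullback of Fubini--Study rather than an arbitrary $H_0$, and that you supply an explicit implicit-function-theorem argument for smooth dependence of $u$ on $f$ where the paper simply defers to \cite{Br} and \cite{L}.
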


\begin{proof}({\em Sketch})
We provide only the definition of $\Phi_s$ as a reference for later construction. For complete proofs and justifications, see \cite{Br} for the case $s=1$ and \cite{L} for generalization to all $s$.

The inverse direction of $\Phi_s$ is more obvious. For $[D,\phi]\in \nu_{k+1,0}(s)$, we define

\[\Phi_s^{-1}([D,\phi])(z):=\tilde{\phi}(z)=[\phi_0(z),\ldots,\phi_k(z)].\]

\noindent The map is well defined as $\phi_i(z)$'s are never all zero. Moreover, on a $U(1)$ bundle, the transition map multiplies each section by a uniform nonzero scalar, having no effect on the definition of $\tilde{\phi}$. Since each component is holomorphic near every $z \in \Sigma$ and has order of vanishing $r$, $\Phi_s^{-1}([D,\phi)]$ is indeed an element of $\Hol$.

The forward direction is also a standard construction in algebraic geometry. We start with a holomorphic map $\tilde{\phi}\in \Hol$. Consider $\mathcal{O}_{\mathbb{CP}^{k}}(1)$, the anti-tautological line bundle over $\pk$ with hyperplane sections $s_0,\ldots,s_k$. Each $s_j$ vanishes precisely on the hyperplane defined by $z_j=0$.
Let $L=\tilde{\phi}^*\mathcal{O}_{\mathbb{CP}^k}(1)$ be the pullbacked line bundle on $\Sigma$ endowed with sections $\phi=(\phi_0,\ldots,\phi_k)\in \Omega^0(L)^{\oplus {(k+1)}}$ by pulling back
$s_1,\ldots,s_k$ via $\tilde{\phi}$. The map $\tilde{\phi}$ also endows a
holomorphic structure $\bar{\partial}_L$ and a background metric $H$ on $L$ by pulling back the standard complex structure and Fubini-Study metric $H_{FS}$ on $\mathcal{O}_{\mathbb{CP}^k}(1)$, determining a corresponding background unitary connection $D$. We then find a suitable complex gauge transformation turning $(D,\phi)$ into a pair that solves the vortex equations \eqref{s-vortex}. On a line bundle, such a transformation is given by a nonzero real smooth function $g_s=e^{u_s}$, acting on $(D,\phi)$ in the standard ways:

\[g_s \cdot D = D\left(e^{u_s} (\bar{\p}_L e^{-u_s})\right):= D_s,\]

\noindent and

\[g_s \cdot \phi_i = e^{u_s} \phi_i := \phi_{i,s}.\]

\noindent Let $\phi_s=(\phi_{i,s})_{i=0}^k$. Plugging $(D_s,\phi_s)$ into \eqref{s-vortex}, we see that the function $u_s$ must satisfies the following Kazdan-Warner type equation:

\begin{equation}
-\Delta_\omega u_s+\frac{s^{2}}{2}\,\sum_{i=0}^k|\phi_i|_{H}^{2}e^{2u_s}+\left(\sqrt{-1}\Lambda F_{H}-\frac{ s^{2}}{2}\right)=0.
\label{s-vortex 1}
\end{equation}

Here, $\Delta_\omega$ is the positive definite Laplacian determined by $\omega$. It has been shown in \cite{Br} (for $s=1$) and \cite{L} (for general $s$) that equation \eqref{s-vortex 1} has a unique smooth solution provided that $s$ is large enough. The analytic tool used to solve $u_s$ is developed in \cite{kw} using the method of super and sub solutions. We will mention more details in the next section. That the two correspondences are smooth inverses to each other is verified in \cite{Br} and \cite{L} via straightforward arguments.

\end{proof}

With this correspondence, the entire vortex moduli space can be identified without difficulty. For a general vortex $[D,\phi] \in \nu_{k+1}(s)$, we first locate the common zeros of the $\phi_i$'s, an element $E \in \sym{l}$ for some $l \in \{0,\ldots,r\}$. The $k+1$ sections define a holomorphic map from $E$ to $\pk$ in identical way as $\Phi_s^{-1}$. This is map is bounded away from $E$, and can therefore be uniquely extended to the entire $\Sigma$ by locally dividing out the zeros of minimum order of vanishing near each common zero. The holomorphic map defined is then of degree $r-l$. See \cite{BDW} and \cite{Mi} for details. We have

\begin{proposition}
The vortex moduli space $\nu_k(s)$ is diffeomorphic to the Uhlenbeck compactification of $\Hol$:

\[\cHol=\bigsqcup_{l=0}^r \left(Sym^l \Sigma \times \mathcal{H}_{r-l,k}\right).\]
\label{moduli space of entire vortex}
\end{proposition}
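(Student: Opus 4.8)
The plan is to build an explicit bijection $\Psi:\nu_{k+1}(s)\to\cHol$ stratum by stratum, using the degree-lowering operation of dividing out common zeros, and then to verify that $\Psi$ is a diffeomorphism compatible with the topology of Definition \ref{definition of Uhlenbeck compactification}. On the open stratum $l=0$ the map $\Psi$ will simply be $\Phi_s^{-1}$ of Theorem \ref{moduli space correspondence}, so the content lies in the extension across the lower strata.

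First I would define the forward map. For a gauge class $[D,\phi]\in\nu_{k+1}(s)$, the integrability equations $F_D^{(0,2)}=0$ and $D^{(0,1)}\phi_i=0$ make $(L,\bar{\p}_D)$ a holomorphic line bundle of degree $r$ and each $\phi_i$ a holomorphic section. Since a vortex requires that the $\phi_i$ not all vanish identically, the common zero locus is a well-defined effective divisor $E=\gcd\left((\phi_0)_0,\ldots,(\phi_k)_0\right)\in\sym{l}$ for a unique $l\in\{0,\ldots,r\}$; it is invariant under the unitary gauge action because gauge transformations are nonvanishing and do not move zeros. I would then divide out $E$: choosing a local holomorphic section $\sigma_E$ of $\mathcal{O}(E)$ with divisor exactly $E$, the quotients $\phi_i'=\phi_i/\sigma_E$ are holomorphic sections of $L'=L\otimes\mathcal{O}(-E)$, a bundle of degree $r-l$, and by construction they have no common zero. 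Re-solving \eqref{s-vortex 1} for this reduced data places it in $\nu_{k+1,0}(s)$ for the degree-$(r-l)$ problem, hence, by the degree-$(r-l)$ analogue of Theorem \ref{moduli space correspondence}, yields a map $\tilde{\phi}'\in\mathcal{H}_{r-l,k}$. Setting $\Psi([D,\phi])=(E,\tilde{\phi}')$ produces a point of $\sym{l}\times\mathcal{H}_{r-l,k}$.

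For the inverse I would reverse this procedure. Given $(E,\tilde{\phi}')\in\sym{l}\times\mathcal{H}_{r-l,k}$, the map $\Phi_s$ in degree $r-l$ yields a no-common-zero vortex $[D',\phi']$ representing holomorphic data $(L',\phi')$; tensoring with $(\mathcal{O}(E),\sigma_E)$ gives holomorphic data $(L'\otimes\mathcal{O}(E),\sigma_E\cdot\phi')$ of degree $r$ whose common zero divisor is exactly $E$. Solving \eqref{s-vortex 1} for this degree-$r$ bundle, which is uniquely possible in the stable range by \cite{Br} and \cite{L}, returns a genuine vortex $[D,\phi]\in\nu_{k+1}(s)$. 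Because both constructions act as the identity at the level of complex gauge orbits of holomorphic data, with the Hitchin--Kobayashi correspondence realizing each orbit by a unique vortex, the two maps are mutual inverses and give a set-theoretic bijection.

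The main obstacle is showing that $\Psi$ is a homeomorphism, indeed a diffeomorphism, for the product of the weak-$\ast$ and $C^{\infty}$ topologies on $\cHol$. The set-theoretic step is essentially bookkeeping of divisors and sections, but the topological compatibility must reconcile two a priori different notions of convergence: the analytic convergence of vortices, governed by the energy density $e(f)$ and the solutions $u_s$ of \eqref{s-vortex 1}, and the algebraic Uhlenbeck convergence of $e(f)+\delta_E$ in Definition \ref{definition of Uhlenbeck compactification}. I would reduce this to the established vortex-bubbling analysis of \cite{BDW} and \cite{SU}: with $s$ fixed in the stable range, $u_s$ depends smoothly on the holomorphic data on compact subsets of $\Sigma\setminus E$, yielding $C^\infty_{loc}(\Sigma\setminus E)$ convergence of the reduced maps, while the third vortex equation forces the energy of a degenerating family to concentrate along $E$ with total mass matching $\delta_E$. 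Checking that these two convergences assemble into continuity of $\Psi$ and $\Psi^{-1}$ across every stratum boundary — equivalently, that collisions of the $l$ divisor points and the transition $\mathcal{H}_{r-l,k}\to\mathcal{H}_{r-l',k}$ are matched on both sides — is the technical heart, and it is precisely the convergence statement the paper defers to its later treatment of integrals over $\cHol$.
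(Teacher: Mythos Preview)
Your proposal is correct and follows essentially the same route as the paper, which does not give a formal proof but only sketches the forward map in the paragraph preceding the proposition (locate the common zero divisor $E\in\sym{l}$, divide it out to obtain a degree-$(r-l)$ holomorphic map) and defers the details to \cite{BDW} and \cite{Mi}. Your version is more complete in that you also construct the inverse and flag the topological compatibility; the only superfluous step is re-solving \eqref{s-vortex 1} in the forward direction, since $\Phi_s^{-1}$ already returns $[\phi_0':\cdots:\phi_k']$ directly from the holomorphic data without passing through a new vortex.
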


The $L^2$ metrics for $\nu_{k+1}(s)$ is also defined naturally. At $(D_s,\phi_s) \in \mathcal{A}(H)\times \Omega^0(L)^{\times (k+1)}$, we define,

\begin{definition}[$L^2$ Metrics on $\nu_{k+1}(s)$]
\begin{equation}
g_{s}((\dot{A_{s}},\dot{\phi_{s}}),(\dot{A_{s}},\dot{\phi_{s}}))=\int_{\Sigma}\frac{1}{2s^{2}}\dot{A_s}\wedge
\bar{*}\dot{A_s}+<\dot{\phi_{s}},\dot{\phi_{s}}>_{H}vol_{\Sigma}.
\label{L 2 metric}
\end{equation}
\end{definition}

\noindent Here, $(\dot{A}_s, \dot{\phi}_s)$ denotes a tangent vector in $T_{(D_s,\phi_s)}(\mathcal{A}(H)\times \Omega^0(L)^{\times k+1}) \simeq \mathcal{A}^1(\Sigma) \times \Omega^0(L)^{\times (k+1)}$, and $\left<\dot{\phi}_s,\dot{\phi}_s\right>=\sum_{i=0}^k \left<\dot{\phi}_{i,s},\dot{\phi}_{i,s}\right>$. The identification is justified by the fact that $\Omega^0(L)$ is a vector space and $\mathcal{A}(H)$ is an affine space modeled on the vector space $\Omega^1(\Sigma)$, the space of complex valued one forms on $\Sigma$. (cf. Chapter V of \cite{K}). This identification also justifies the applications of Hodge star $\bar{*}$ and $<\cdot,\cdot>_H$ in the integrand of \eqref{L 2 metric}, since $(\dot{A_s},\dot{\phi_s})$ lies in essentially isomorphic spaces as $(D_s,\phi_s)$ does. By choosing tangent vectors orthogonal to $\mathcal{G}$-gauge transformations, \eqref{L 2 metric} descends to a well defined metric on the quotient space $\left(\mathcal{A}(H) \times \Omega^0(L)^{\oplus (k+1)}\right) / \mathcal{G}$ and restricts to the open subset $\nu_{k+1,0}(s)$ (cf. \cite{L}). With respect to this metric, Baptista has explicitly computed the formula for the volume of $\nu_{k+1}(s)$ in \cite{Ba}, generalized directly from \cite{MN}:

\begin{theorem}
For a degree $r$ Hermitian line bundle $L$ over a closed K\"ahler Riemann surface $\Sigma$ of genus $b$, where $r>2b-2$, the $L^2$ volume of the vortex moduli space $\nu_{k+1}(s)$ is given by

\begin{equation}
Vol \nu_{k+1}(s) = \sum_{i=0}^b \frac{b!(k+1)^{b-i}}{i!(q-i)!(b-i)!}\left(\frac{4\pi}{s^2}\right)^i\left(Vol \Sigma - \frac{4\pi}{s^2}r\right)^{q-i},
\label{volume of moduli space}
\end{equation}

\noindent where

\[q=b+(k+1)(r+1-b)-1.\]

\label{volume for nu k}
\end{theorem}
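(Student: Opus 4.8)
The plan is to read off the volume as a \emph{symplectic volume}. Since the $L^2$ metric \eqref{L 2 metric} is induced by K\"ahler reduction of the flat K\"ahler metric on the configuration space $\mathcal{A}(H)\times\Omega^0(L)^{\times(k+1)}$ along the gauge group $\mathcal{G}$, the reduced metric is K\"ahler, with K\"ahler form $\Omega_s$. By Proposition \ref{moduli space of entire vortex} the space $\nu_{k+1}(s)$ is diffeomorphic to the compact manifold $\cHol$, so, setting $q=b+(k+1)(r+1-b)-1=\dim_{\CC}\nu_{k+1}(s)$ (which agrees with Lemma \ref{dimension of Hol}), the volume is the single top intersection number
\[\Vol\,\nu_{k+1}(s)=\int_{\nu_{k+1}(s)}\frac{\Omega_s^{\,q}}{q!}.\]
Everything then reduces to (i) pinning down the de Rham class $[\Omega_s]$ and (ii) evaluating intersection numbers on $\nu_{k+1}(s)$.

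For (i) I would linearise the equations \eqref{s-vortex} about a solution and substitute the linearised solutions into \eqref{L 2 metric}, following the scheme of \cite{MN} and \cite{Ba}. The parameter $s^2$ enters \eqref{s-vortex} only through the level of the moment map (the term $\frac{s^2}{2}(\sum_i|\phi_i|^2_H-1)$), so by the variation-of-reduction principle $[\Omega_s]$ depends \emph{affinely} on $1/s^2$. Concretely I expect
\[[\Omega_s]=\Bigl(\Vol\Sigma-\frac{4\pi r}{s^2}\Bigr)X+\frac{4\pi}{s^2}\,Y,\]
with $X,Y\in H^2(\nu_{k+1}(s);\RR)$ \emph{fixed}, $s$-independent integral classes: $X$ the K\"ahler class of the fibrewise projective directions, $Y$ the pullback under $\mu$ of a polarisation class from the $b$-dimensional Jacobian $J(\Sigma)$. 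The $4\pi r$ and $4\pi$ normalisations are forced by the curvature identity $\int_\Sigma\sqrt{-1}\,\Lambda F_D\,vol_\Sigma=2\pi r$, which holds because $\deg L=r$.

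For (ii) I would use the description from Section 2, namely that $\nu_{k+1}(s)\cong\cHol$ sits inside $(\symr)^{\times(k+1)}\times\pk$ and fibres over $J(\Sigma)$ through $\mu$. Since $Y$ is pulled back from the $b$-complex-dimensional base, $Y^{b+1}=0$; this truncates the binomial expansion of $\Omega_s^q$ at $i=b$ and forces $\int X^{q-i}Y^i=0$ for $i>b$. For $0\le i\le b$ the integral factors into an integration over the projective fibres (absorbing $X^{q-i}$) and over $J(\Sigma)$ (absorbing $Y^i$); invoking the classical intersection theory of symmetric products (Macdonald) together with one factor of $k+1$ per independent section, I expect
\[\int_{\nu_{k+1}(s)}X^{q-i}Y^i=\frac{b!}{(b-i)!}\,(k+1)^{b-i},\qquad 0\le i\le b.\]
Inserting $[\Omega_s]$ and these numbers into the binomial expansion of $\Omega_s^q/q!$ and simplifying the binomial coefficients reproduces \eqref{volume of moduli space} verbatim.

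The main obstacle is step (i): identifying $[\Omega_s]$ together with the \emph{exact} normalisations of the $4\pi$ factors. The affine $1/s^2$ dependence and the split into a fibre class plus a Jacobian class are structural, but matching the constants requires carrying the linearised vortex equations and the curvature identity through explicitly rather than merely qualitatively, which is where the real work lies. Step (ii) is also substantive, but it ultimately reduces to the classical cohomology of $\symr$ fibred over the Jacobian and so is, in principle, routine.
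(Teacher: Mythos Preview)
Your outline matches the method the paper attributes to the original sources: the paper does not prove this theorem at all but quotes it from \cite{Ba} (generalising \cite{MN}), and the one-sentence summary it gives of Baptista's argument --- realise $\nu_{k+1}(s)$ as a fibre bundle over the Jacobian, decompose the K\"ahler class of the $L^2$ metric into horizontal and fibrewise pieces expressed via characteristic classes, then integrate fibrewise --- is exactly your plan (i)+(ii). So there is no discrepancy of approach to report; your proposal is a faithful sketch of the cited proof, with the honest caveat you already flag that the normalisation constants in $[\Omega_s]$ and the exact intersection numbers $\int X^{q-i}Y^i$ require the detailed computations in \cite{Ba} rather than the structural argument alone.
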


\noindent The method for the construction of this formula is to realize $\nu_{k+1}(s)$ as a fiber bundle over the moduli space of holomorphic structures, which are identified with the Jacobian torus. The K\"ahler form of the $L^2$ metric \eqref{L 2 metric} can be decomposed into parallel and horizontal components in terms of characteristic classes and fiberwise integration is performed to obtain the formula.

The asymptotic behaviours of $\nu_{k+1}(s)$, \eqref{L 2 metric}, and \eqref{volume of moduli space} as $s\to \infty$ are our main interest. One observes that in the $s-$vortex equations \eqref{s-vortex}, the only $s$ dependence lies in the third equation. As $s$ increase, the curvature term in the third equation becomes negligible compared to the section terms and we may formally define the vortex moduli space at infinity, $\nu_{k+1}(\infty)$, to be the unitary gauge class of solutions to

\begin{equation}
\begin{cases}
F_{D}^{(0,2)}=0\\
D^{(0,1)}\phi=0\\
\sum_{i=0}^k|\phi_i|^2_H-1=0.
\end{cases}\label{infinity-vortex}
\end{equation}

\noindent It is then immediately clear from the third equation in \eqref{infinity-vortex} that no $k+1$ tuples of sections in $\nu_{k+1}(\infty)$ vanish simultaneously. We have $\nu_{k+1}(\infty) = \nu_{k+1,0}(\infty)$. Moreover, \eqref{infinity-vortex} indicates that solutions to these equations consist of precisely pairs of unitary connection $D$ and $k+1$ $D$-holomorphic sections with image in the unit sphere $\mathbb{S}^{2k+1}$. Modulo gauge transformation, each vortex at $s=\infty$ then defines a holomorphic map from $\Sigma$ to $\pk$ in exactly identical way provided in the proof of Theorem \ref{moduli space correspondence}. We have a diffeomorphism

\[\Phi_\infty: \Hol \to \nu_{k+1}(\infty)\]

\noindent defined with precisely the same way as $\Phi_s$. A naturally plausible attempt to compute the $L^2$ volume of $\Hol$, as conjecturized in \cite{Ba}, is then to let $s \to \infty$ in \eqref{volume of moduli space}. With the established result in \cite{L} that the $L^2$ metric \eqref{L 2 metric} converges to the $L^2$ metric \eqref{definition of L 2 metric on maps} in the sense of Cheeger-Gromov, we aim to further establish the fact that formula \eqref{volume of moduli space} is true at $s=\infty$.

\section{Convergence of Moduli Spaces and Their Metrics}

We provide a brief conclusion of solving $u_s$ using techniques developed in \cite{kw}, which are important tools for the main constructions of this article. We recall equation \eqref{s-vortex 1}, the PDE characterizing solutions to vortex equations:

\begin{equation}
-\Delta_\omega u_s+\frac{s^{2}}{2}\,\sum_{i=0}^k|\phi_i|_{H}^{2}e^{2u_s}+\left(\sqrt{-1}\Lambda F_{H}-\frac{ s^{2}}{2}\right)=0.
\label{Kazdan Warner again}
\end{equation}

\noindent Solving this equation is equivalent to solving the Kazdan-Warner equations in \cite{kw} in the following ways. Normalizing the K\"ahler metric $\omega$ so that $Vol_\omega \Sigma =1$, consider

\[
\begin{aligned}c(s): & =2\int_{\Sigma}\left(\sqrt{-1}\Lambda F_{H}-\frac{s^{2}}{2}\right)dvol_\omega\\
 & =2\int_{\Sigma}\sqrt{-1}\Lambda F_{H}\omega- s^{2}=2c_1- s^2,
\end{aligned}
\]

\noindent where $c_{1}=\int_{\Sigma}\sqrt{-1}\Lambda F_{H} \omega$ is independent
of $s$ and $H$. Consider $\psi$, the solution to:

\begin{equation}
\Delta_\omega \psi=\left(\sqrt{-1}\Lambda F_{H}-\frac{s^{2}}{2}\right)-\frac{c(s)}{2}=\sqrt{-1}\Lambda F_{H}-c_{1},
\label{equaiton for psi}
\end{equation}

\noindent which is clearly independent of $s$. Here, $\Delta_\omega$ is the Laplacian operator with respect to the K\"ahler form $\omega$. Setting $\varphi_{s}:=2(u_{s}-\psi)$, one can readily see that $u_s$ satisfies \eqref{Kazdan Warner again} if and only if $\varphi_s$ satisfies:

\begin{equation}
\Delta\varphi_{s}-\frac{s^{2}}{2}\left(\sum_{i=0}^k\left|\phi_i\right|_{H}^{2}e^{2\psi}\right)e^{\varphi_{s}}-c(s)=0.\label{Kazdan Warner Equation 0}
\end{equation}

\noindent This is of the form:

\begin{equation}
\Delta\varphi_{s}=-\left(\frac{s^{2}}{2}h\right)\, e^{\varphi_{s}}+c(s),\label{Kazdan Warner Equation}
\end{equation}

\noindent with the strictly negative norm function

\begin{equation}
h =-\sum_{i=0}^k|\phi_i|_H^2 e^{2\psi}
\label{norm function}
\end{equation}

\noindent and $c(s)<0$ for $s$ in appropriate range. Lemma 9.3 in \cite{kw} guarantees the unique solution to exist using the method of super and sub solutions, a consequence of the maximum principle applied to the operators $L_s:= \Delta_\omega -s^2k I$, where $k>0$ is a constant determined by $h$. In \cite{kw}, sufficient conditions to solve \eqref{Kazdan Warner Equation} are merely $h \leq 0$. However, to obtain further uniformity and convergent behaviors of $u_s$ over $s$ requires $h<0$, which depends crucially on the non-simultaneous vanishing of the $k+1$ sections. In \cite{L}, the author has proved

\begin{theorem}
On a compact Riemannian manifold $M$ without boundary, let $c_1$ be any constant, $c_2$ any positive constant,
and $h$ any negative smooth function. Let $c(s) = c_1 -c_2 s^2$, for each $s$ large enough,
the unique solutions $\varphi_s \in C^\infty$ for the equations

\[
\Delta\varphi_{s}=c(s)-s^2he^{\varphi_s}.
\]

\noindent are uniformly bounded in all Sobolev spaces. Moreover, in the limit
$s \to \infty$, $\varphi_{s}$ converges smoothly to

\[\varphi_\infty = \log\left(\frac{c_2}{-h}\right),\]

\noindent the unique solution to

\[
he^{\varphi_{\infty}}+c_{2}=0.
\]

\label{Main Theorem on Convergence of Gauge}

\end{theorem}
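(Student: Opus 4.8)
The existence and uniqueness of $\varphi_s$ for large $s$ is the content of Lemma~9.3 of \cite{kw} applied to \eqref{Kazdan Warner Equation}, so I would take it as given and concentrate on the two quantitative assertions: uniform bounds in every Sobolev space and smooth convergence to $\varphi_\infty$. The guiding observation is that $\varphi_\infty=\log(c_2/(-h))$ is an \emph{approximate} solution whose error does not grow with $s$. Indeed, since $-h\,e^{\varphi_\infty}=c_2$, substituting $\varphi_\infty$ into the right-hand side of the equation gives $c(s)-s^2 h\,e^{\varphi_\infty}=c_1-c_2s^2+c_2s^2=c_1$, so the residual $\Delta\varphi_\infty-c_1$ is a fixed smooth function, independent of $s$. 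This is what makes the barriers $\varphi_\infty\pm A s^{-2}$ natural candidates for comparison.

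First I would establish the uniform $C^0$ bound and the stated convergence in one stroke by the method of sub- and supersolutions. Writing $v_s:=\varphi_s-\varphi_\infty$ and using $-h\,e^{\varphi_\infty}=c_2$, the equation becomes $\Delta v_s=(c_1-\Delta\varphi_\infty)+c_2 s^2\,(e^{v_s}-1)$. Substituting the constants $\pm A s^{-2}$ for $v_s$, the nonlinear term contributes $c_2 s^2\bigl(e^{\pm A/s^2}-1\bigr)=\pm c_2 A+O(s^{-2})$, which dominates the fixed function $c_1-\Delta\varphi_\infty$ once $A>\|c_1-\Delta\varphi_\infty\|_{C^0}/c_2$. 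Because $h<0$, the zeroth-order term carries precisely the sign for which $L_s$ obeys the maximum principle, as recorded after \eqref{Kazdan Warner Equation}; hence $\varphi_\infty-A s^{-2}$ and $\varphi_\infty+A s^{-2}$ form an ordered pair of sub/supersolutions, and the comparison principle traps $\varphi_\infty-A s^{-2}\le\varphi_s\le\varphi_\infty+A s^{-2}$. This yields both the uniform $C^0$ bound and $\|\varphi_s-\varphi_\infty\|_{C^0}\le A s^{-2}\to0$; the same favorable sign makes $L_s$ coercive, which re-proves uniqueness, since the difference of two solutions would lie in the kernel of a coercive operator.

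To upgrade $C^0$ convergence to smooth convergence and to obtain the uniform Sobolev bounds, I would linearize. Setting $b_s:=(e^{v_s}-1)/v_s$, which is smooth, positive, and tends to $1$ in $C^0$ because $v_s\to0$, the equation for $v_s$ becomes the linear elliptic equation $L_s v_s=G$, where $L_s:=\Delta-c_2 s^2 b_s$ and $G:=c_1-\Delta\varphi_\infty$ is a fixed smooth function. The crucial ingredient is a family of elliptic estimates that are \emph{uniform in $s$}: since the coefficient $c_2 s^2 b_s$ is bounded below by $\tfrac12 c_2 s^2$ for $s$ large and has the favorable sign, the maximum principle gives $\|u\|_{C^0}\le C s^{-2}\|L_s u\|_{C^0}$, the energy estimate afforded by the coercivity of $L_s$ gives $\|u\|_{L^2}\le C s^{-2}\|L_s u\|_{L^2}$, and Calder\'on--Zygmund theory supplies the two missing derivatives; inductively one would obtain $\|u\|_{W^{k+2,p}}+s^2\|u\|_{W^{k,p}}\le C_k\|L_s u\|_{W^{k,p}}$ for every $k$. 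Applying this with $u=v_s$ and $L_s v_s=G$ forces $\|v_s\|_{W^{k,p}}\le C_k s^{-2}\to0$ for all $k$ and $p$, so $\varphi_s=\varphi_\infty+v_s$ is uniformly bounded in every Sobolev space and, by Sobolev embedding, $\varphi_s\to\varphi_\infty$ in every $C^k$, that is, smoothly.

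The main obstacle is exactly this uniformity in $s$: the right-hand side $c(s)-s^2 h e^{\varphi_s}$ of the original equation is of size $O(s^2)$, so naive elliptic estimates degenerate, and everything hinges on the cancellation above, whereby the $s^2$-growth is absorbed by the $O(s^{-2})$ closeness of $\varphi_s$ to $\varphi_\infty$. Carrying the induction through requires controlling the derivatives of the coefficient $b_s$ at each stage---these are polynomial expressions in the lower derivatives of $v_s$, which are already known to be $O(s^{-2})$ and therefore feed back favorably---and absorbing the curvature commutators $[\Delta,\nabla^j]$ produced when differentiating on the curved $M$; being of strictly lower order than the leading term, they are dominated by the large coercive coefficient $c_2 s^2 b_s$ and do not spoil the estimate. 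Establishing the uniform-in-$s$ estimate $\|u\|_{W^{k+2,p}}+s^2\|u\|_{W^{k,p}}\le C_k\|L_s u\|_{W^{k,p}}$, with constants independent of $s$, is the technical heart of the argument.
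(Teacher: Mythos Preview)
Your argument is correct and in the same spirit as the paper's, but the implementation differs. The paper (citing \cite{L}) does not compare $\varphi_s$ directly with $\varphi_\infty$; instead it introduces the refined approximate solution
\[
v_s=\log\frac{\Delta_\omega\bigl(-\log(-h)\bigr)-c(s)}{-s^2 h},
\]
which solves the equation exactly up to an explicit error $E_s=\Delta_\omega\log\bigl((\Delta_\omega(-\log(-h))-c(s))/s^2\bigr)$ decaying like $s^{-2}$ in every $C^k$ norm. The maximum principle is then applied to $\varphi_s-v_s$ (rather than to $\varphi_s-\varphi_\infty$), and the bootstrap proceeds by showing $\|\varphi_s-v_s\|_{H^{l,\infty}}\to 0$ for all $l$; since $v_s\to\varphi_\infty$ smoothly by inspection, the conclusion follows. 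Your route is more economical: you observe that $\varphi_\infty$ itself already has $s$-independent residual $c_1-\Delta\varphi_\infty$, trap $\varphi_s$ between the barriers $\varphi_\infty\pm A s^{-2}$, and then run a uniform-in-$s$ elliptic estimate for the linearized operator $L_s=\Delta-c_2 s^2 b_s$. The paper's choice of the sharper approximant $v_s$ is not idle, however: the explicit formulas \eqref{approximated solutions}--\eqref{error terms} are invoked verbatim in Propositions~\ref{L p convergence of Z}--\ref{L p convergence of X} to obtain bounds that are uniform not only in $s$ but also over the compactification $\cHol$, where $h$ varies. Your argument would need to be re-run with that extra parameter to serve the same purpose downstream.
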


\noindent The theorem is an independent analytic result with the additional benefit that it holds on general compact Riemannian manifolds of any dimension and complex structure is not necessary. In particular, the uniformity of Kazdan-Warner equations can be deduced from this theorem with $c_1=\int_\Sigma \sqrt{-1}\Lambda F_H dvol_\Sigma$ and $c_2=1$. Theorem \ref{Main Theorem on Convergence of Gauge} is proved by explicitly constructing the family of approximated solutions $v_s$ that approach $\varphi_\infty$ smoothly and moreover, the difference of $v_s$ and $\varphi_s$ approaches zero in any Sobolev norm. The family of approximated solutions is given by

\begin{equation}
v_s := \log \left(\frac{\Delta_\omega \left(-\log(-h)\right)-c(s)}{-s^2h}\right)
\label{approximated solutions}
\end{equation}

\noindent with the family of error terms

\begin{equation}
E_s := \Delta_\omega \left(\frac{\Delta_\omega \left(-\log(-h)\right)-c(s)}{s^2}\right)
\label{error terms}
\end{equation}

\noindent so that

\[\Delta_\omega v_s = c(s) =s^2he^{v_s}+E_s.\]

\noindent The two families $v_s$ and $\varphi_s$ satisfy the following lemma:

\begin{lemma}
For all $l \in \mathbb{N}$, the family of actual and approximated solutions to the Kazdan-Warner equations \eqref{Kazdan Warner Equation}, $\varphi_s$ and $v_s$, we have

\[\lim_{s\to \infty} \Wnorm{\varphi_s - v_s}{l}{\infty}=0,\]

\noindent where $H^{l,\infty}$ is the Sobolev space of order $l$ and exponent $\infty$.

\label{Convergence of approximated and actual solutions}
\end{lemma}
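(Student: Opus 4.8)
The plan is to reduce the statement to an estimate for a single linear elliptic equation for $w_s := \varphi_s - v_s$ carrying a large zeroth-order coefficient of size $s^2$, and then to \emph{invert} that coefficient rather than fight it. Set $a := -h > 0$. Subtracting the defining relation of the approximate solution, $\Delta_\omega v_s = c(s) + \tfrac{s^2}{2} a\, e^{v_s} + E_s$ (which is what \eqref{approximated solutions}--\eqref{error terms} assert), from the Kazdan-Warner equation \eqref{Kazdan Warner Equation} for the genuine solution, $\Delta_\omega \varphi_s = c(s) + \tfrac{s^2}{2} a\, e^{\varphi_s}$, and using the exact identity $e^{\varphi_s}-e^{v_s}=w_s\int_0^1 e^{v_s+tw_s}\,dt$, I obtain
\[
\Delta_\omega w_s = Q_s\, w_s - E_s, \qquad Q_s := \tfrac{s^2}{2}\, a\int_0^1 e^{v_s+tw_s}\,dt .
\]
The point of the integral form is that $Q_s$ is a genuinely smooth, strictly positive function (no mean-value point intervenes), so it may be differentiated and inverted freely. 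The exact numerical constant in front of $s^2$ will be irrelevant.

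Next I would record two quantitative inputs. First, from \eqref{error terms}, since $c(s)/s^2$ is spatially constant and hence annihilated by $\Delta_\omega$, one has the exact identity $E_s = s^{-2}\,\Delta_\omega^2\!\left(-\log(-h)\right)$, so $\Wnorm{E_s}{l}{\infty}\le C_l\, s^{-2}$ for every $l$, with $C_l$ independent of $s$. Second, from Theorem \ref{Main Theorem on Convergence of Gauge} the solutions $\varphi_s$ are bounded in every $H^{l,\infty}$ uniformly in $s$, while $v_s\to\varphi_\infty$ smoothly and is likewise uniformly bounded (this last fact being elementary from the explicit formula \eqref{approximated solutions}); hence $w_s$ is bounded in every $H^{l,\infty}$ by an $s$-independent constant $B_l$. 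Because $v_s$ and $w_s$ are uniformly bounded and $a>0$ is fixed, it follows that $Q_s\ge c\,s^2$ and that $s^{-2}Q_s$ is bounded in every $H^{l,\infty}$ uniformly in $s$; consequently $\Wnorm{Q_s^{-1}}{l}{\infty}\le C_l\, s^{-2}$ as well.

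The key step is then purely algebraic. Rearranging the difference equation gives $w_s = Q_s^{-1}\!\left(\Delta_\omega w_s + E_s\right)$. Taking $H^{l,\infty}$ norms and using that these spaces are Banach algebras together with the bounds above,
\[
\Wnorm{w_s}{l}{\infty}\le C_l\, s^{-2}\Big(\Wnorm{w_s}{l+2}{\infty}+\Wnorm{E_s}{l}{\infty}\Big)\le C_l\, s^{-2}\big(B_{l+2}+C_l' s^{-2}\big)\le C_l'' s^{-2},
\]
which tends to $0$ as $s\to\infty$ for each fixed $l$; this is exactly the assertion.

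I expect the main obstacle to be precisely the large coefficient $Q_s\sim s^2$. A naive elliptic bootstrap applied to $\Delta_\omega w_s = Q_s w_s - E_s$ multiplies the unknown by $Q_s$ and therefore loses a factor $s^2$ every time two derivatives are gained, so it stalls after finitely many steps and cannot reach arbitrarily high norms; this is why convergence in \emph{all} Sobolev norms, rather than merely in $C^0$, is the delicate point. The device that defeats it is to solve for $w_s$ so that $Q_s$ appears inverted, converting the large parameter into the gain $s^{-2}$, while the price---two extra derivatives of $w_s$ on the right---is paid by the $s$-uniform a priori bounds of Theorem \ref{Main Theorem on Convergence of Gauge}. (The base case $l=0$ can alternatively be obtained on its own by applying the maximum principle to the operator $\Delta_\omega-Q_s$, which needs only uniform $C^0$ bounds on $\varphi_s$ and $v_s$.)
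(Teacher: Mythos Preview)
Your algebraic inversion $w_s = Q_s^{-1}(\Delta_\omega w_s + E_s)$ is an elegant way to turn the dangerous $s^2$ into a gain, and the estimate you derive from it is correct \emph{provided} the uniform $H^{l+2,\infty}$ bounds on $\varphi_s$ are already available. The difficulty is that you obtain those bounds by citing Theorem~\ref{Main Theorem on Convergence of Gauge}, and the paper states explicitly (in the paragraph just before the Lemma) that Theorem~\ref{Main Theorem on Convergence of Gauge} is proved in \cite{L} precisely by constructing $v_s$ and showing that $\varphi_s - v_s \to 0$ in every Sobolev norm. In other words, the Lemma \emph{is} the mechanism by which the uniform boundedness in Theorem~\ref{Main Theorem on Convergence of Gauge} is established; invoking that boundedness to prove the Lemma is circular. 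Your own parenthetical remark at the end---that the case $l=0$ follows from the maximum principle applied to $\Delta_\omega - Q_s$ using only $C^0$ control---is exactly the non-circular entry point, but your main argument does not build on it: it jumps straight to the full $H^{l+2,\infty}$ bound, which is not yet available.

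The paper's route (carried out in \cite{L}, and rehearsed in this paper in the proofs of Propositions~\ref{L p convergence of Z} and~\ref{L p convergence of Y}) avoids this by running an iteration that is \emph{self-starting}. One first applies the maximum principle to the difference of the two equations at an extremum of $\varphi_s - v_s$ on $\Sigma$, obtaining $|he^{\varphi_s} - he^{v_s}| \le C\,\sup|E_s|/s^2$ and hence $C^0$ decay of $w_s$ with no a priori input beyond the bare existence of $\varphi_s$. One then differentiates the Kazdan--Warner equation and its approximate counterpart, subtracts, and reapplies the maximum principle to the resulting linear equation for the derivative of $w_s$; the coefficients now involve only quantities already controlled at the previous step (cf.\ \eqref{decaying of beta derivative of varphi}). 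Each stage feeds the next, so arbitrarily high Sobolev norms are reached without ever assuming them. Your inversion trick gives a slicker one-step passage from order $l+2$ down to order $l$, but the price is that it runs in the wrong direction for an induction and therefore needs the top-order bound as an input; the paper's maximum-principle iteration runs upward and needs nothing.
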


\noindent The lemma will again be a crucial part of the main result of this paper. The complete proof is summarized in the next section.

The convergence result and its proof may be directly applied to prove a conjecture on convergence of $L^2$ metrics of vortex moduli spaces \eqref{L 2 metric} to the ordinary $L^2$ metric on maps \eqref{definition of L 2 metric on maps}, posed in \cite{Ba} and proved in \cite{L}. The convergence holds in the sense of Cheeger-Gromov:

\begin{definition} [Cheeger-Gromov Convergence] For all $l \in \mathbb{N}$ and $p \geq 1$, a family of $n$-dimensional Riemannian manifolds
$(M_s,g_s)$ is said to converge to a fixed Riemannian manifold $(M,g)$ in $\W{l}{p}$, in the sense of
Cheeger-Gromov, if there is a covering chart $\{U_k, (x_i^k)\}$ on $M$ and a family of
diffeomorphisms $F_s:M \to M_s$, such that

\begin{equation}
 \Wnormloc{F_s^*(g_s)\left(\frac{\partial}{\partial x_i},\frac{\partial}{\partial x_j}\right) - g\left(\frac{\partial}{\partial x_i},\frac{\partial}{\partial x_j}\right)}{l}{p}{U_k}\to 0.
 \label{Cheeger Gromov convergence}
 \end{equation}

\noindent as $s \to \infty$, for all $k$ and $i,j \in \{1,\ldots,n\}$.
\end{definition}

\noindent We have

\begin{theorem}[Baptista's Conjecture]
Equipping $\pk$ with the Fubini-Study metric, the sequence of metrics $g_s$ on $\nu_{k+1,0}(s)$ given by \eqref{L 2 metric} converges in all Sobolev spaces, in the sense of Cheeger-Gromov, to the $L^2$ metric $<\cdot,\cdot>_{L^2}$ on $\Hol$ given by \eqref{definition of L 2 metric on maps}. The family of diffeomorphisms are precisely $\Phi_s$, as constructed in Theorem \ref{moduli space correspondence}.
\label{Precise Baptista's Conjecture}
\end{theorem}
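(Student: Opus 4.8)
The plan is to factor the convergence through the limiting object at $s=\infty$. Fix a point of $\Hol$ and local moduli coordinates $(w_1,\dots,w_m)$ as in \eqref{coordinate of Hol} on a chart $U\subset\Hol$. Along this chart the diffeomorphism $\Phi_s$ of Theorem \ref{moduli space correspondence} is, by construction, the family $(D_s,\phi_s)=(g_s\cdot D,\;g_s\cdot\phi)$ with $g_s=e^{u_s}$, where the background pair $(D,\phi)$ depends holomorphically on $w$ (it is the pullback of $(\bar\p,H_{FS})$ and of the hyperplane sections under the map determined by $w$) and $u_s$ solves the Kazdan--Warner equation \eqref{Kazdan Warner again}. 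Differentiating in a coordinate direction and projecting onto the $g_s$-orthogonal complement of the gauge orbit produces the tangent vector $d\Phi_s(\p_{w_i})=(\dot A_{s},\dot\phi_{s})$, with section part $\dot\phi_{i,s}=e^{u_s}\big((\p_{w_i}u_s)\phi_i+\p_{w_i}\phi_i\big)+(\text{gauge correction})$ and connection part $\dot A_{s}=\p_{w_i}D+(\bar\p-\p)(\p_{w_i}u_s)+(\text{gauge correction})$. Writing $g_\infty$ for the limiting ($s=\infty$) metric from \eqref{L 2 metric}, whose connection contribution drops out because of the $\tfrac{1}{2s^2}$ weight, it suffices to prove two statements: (i) $\Phi_s^{*}g_s(\p_{w_i},\p_{w_j})\to \Phi_\infty^{*}g_\infty(\p_{w_i},\p_{w_j})$ in $H^{l,p}(U)$ for all $l,p$; and (ii) $\Phi_\infty^{*}g_\infty$ equals the $L^2$ metric \eqref{definition of L 2 metric on maps}.

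Statement (ii) is a pointwise identification on $\Sigma$. By Theorem \ref{Main Theorem on Convergence of Gauge} applied to \eqref{Kazdan Warner again} (with $c_2=1$ and $h$ as in \eqref{norm function}) the limit satisfies $e^{2u_\infty}=\big(\sum_{i=0}^k|\phi_i|_H^2\big)^{-1}$, so the normalized sections $\hat\phi_i:=e^{u_\infty}\phi_i$ take values in $\mathbb{S}^{2k+1}$ and represent $f=[\phi_0:\dots:\phi_k]$ as in \eqref{infinity-vortex}. The gauge-orthogonality condition together with the constraint $\sum_i|\hat\phi_i|_H^2=1$ removes exactly the phase and radial directions from $\dot{\hat\phi}$, so that the surviving component is the orthogonal projection of $\dot{\hat\phi}$ off the complex line $\CC\cdot\hat\phi$; this is precisely $f_*\p_{w_i}$ and its Fubini--Study norm. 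Hence $\int_\Sigma\langle\dot\phi_{i,\infty},\dot\phi_{j,\infty}\rangle_H\,vol_\Sigma=\int_\Sigma\langle f_*\p_{w_i},f_*\p_{w_j}\rangle_{FS}\,vol_\Sigma$, which is \eqref{definition of L 2 metric on maps}.

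For statement (i) I would split the metric \eqref{L 2 metric} into its section and connection contributions. For the section term, smooth convergence $u_s\to u_\infty$ and $\p_{w_i}u_s\to\p_{w_i}u_\infty$ in every Sobolev norm---supplied by Theorem \ref{Main Theorem on Convergence of Gauge} and Lemma \ref{Convergence of approximated and actual solutions}---makes the integrand $\langle\dot\phi_{i,s},\dot\phi_{j,s}\rangle_H$ converge to $\langle\dot\phi_{i,\infty},\dot\phi_{j,\infty}\rangle_H$ uniformly, so the integrals converge. For the connection term, the same uniform bounds show that $\dot A_s$ is bounded in every Sobolev norm independently of $s$, whence $\tfrac{1}{2s^2}\int_\Sigma\dot A_s\wedge\bar{*}\,\dot A_s\to 0$; here one checks that the $g_s$-orthogonal gauge projection is governed by a uniformly invertible Laplace-type operator, so the gauge corrections do not reintroduce $s$-growth. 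Upgrading pointwise convergence on $\Sigma$ to $H^{l,p}(U)$ convergence of the coefficient functions in $w$ is then obtained by differentiating the coefficients in the moduli variables; each such derivative produces higher moduli derivatives $\p_w^\alpha u_s$, all of which converge uniformly by the family version of the same estimates.

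The main obstacle is the family uniformity just invoked: Theorem \ref{Main Theorem on Convergence of Gauge} controls a single Kazdan--Warner equation, whereas here one needs the solutions, and all their moduli derivatives $\p_w^\alpha u_s$, to be bounded and convergent uniformly over the chart $U$ as $s\to\infty$. I would establish this by differentiating \eqref{Kazdan Warner again} in the moduli variables, obtaining linear equations for $\p_w^\alpha u_s$ whose coefficients depend on the lower-order derivatives and on the background data (holomorphic in $w$), and then re-running the super/sub-solution and elliptic-bootstrap estimates of \cite{kw} and \cite{L} with constants uniform in $(w,s)$. The second delicate point is verifying that the $s$-dependent gauge fixing stays uniformly controlled, so that the connection contribution genuinely vanishes in the limit rather than surviving through an unbounded projection; this is where the strict negativity $h<0$, which underlies the uniform invertibility used in Theorem \ref{Main Theorem on Convergence of Gauge}, is essential.
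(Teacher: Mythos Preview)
Your proposal is correct in outline and follows essentially the same strategy as the paper: pull back $g_s$ through $\Phi_s$, split into connection and section contributions, and feed in the smooth convergence of $u_s$ (equivalently $\varphi_s$) and its moduli derivatives supplied by Theorem~\ref{Main Theorem on Convergence of Gauge} and Lemma~\ref{Convergence of approximated and actual solutions}. Your identification of $\Phi_\infty^*g_\infty$ with the $L^2$ metric on maps is also the paper's.

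The one substantive difference is how gauge orthogonality is handled. You project the raw variation onto the $g_s$-orthogonal complement of the gauge orbit and then argue that the projection operator is governed by a uniformly invertible Laplace-type operator; this is an extra analytic step you would have to carry out. The paper instead chooses the background data so that this projection is essentially trivial: it picks the hyperplane sections with constant Fubini--Study norm, so that on the pullback bundle $\sum_i|\phi_i|_H^2$ is constant and hence $\langle\phi,\phi^\alpha\rangle_H=0$ for every moduli direction $\alpha$. With that normalization the pushforward $\Phi_{s,*}(\tilde\xi_\alpha)$ takes the explicit closed form \eqref{pushforward definition}, the cross terms in \eqref{L 2 metric pull back} vanish, and one lands directly on the three-term decomposition $X_{\alpha,\beta,s}+Y_{\alpha,\beta,s}+Z_{\alpha,\beta,s}$ without ever invoking a separate projection estimate. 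So your route works, but the paper's normalization buys a cleaner computation and removes the need to check uniform control of the $s$-dependent gauge fixing; conversely, your formulation is a bit more conceptual and would transfer more readily to settings where such a convenient normalization is unavailable.
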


\begin{proof}(\emph{Sketch})

We provide a brief outline of the proof in \cite{L} with some adjustments to the notations. Fix $\tilde{\phi}\in\Hol$. Pick canonical hyperplane sections $s_0,\ldots,s_k$ with constant Fubini-Study norm 1. On the pullback line bundle $L=\tilde{\phi}^*\mathcal{O}(1)$ with pullback metric $H$ locally given by

\[H(z)=\tilde{\phi}^* H_{FS} (z) = \frac{1}{\sum_{i=0}^k |\tilde{\phi}_i(z)|},\]

\noindent the pullback sections $(\phi_i=\tilde{\phi}^*s_i)_{i=0}^k$ are of constant $H$-norm 1. As in section 3 these initial data give rise to the norm function $h$ and the special gauge function $u_s$ that lead to the solution of vortex equations. Infinitesimally, given a tangent vector $\tilde{\xi}_\alpha \in T_{\tilde{\phi}}\Hol \simeq \Gamma(\tilde{\phi}^* T\pk)$, there are corresponding infinitesimal variations of metric and gauge function, denoted by $H^\alpha$ and $u_s^\alpha$. Moreover, there are $k+1$ sections $\phi^\alpha:=(\phi_i^\alpha)$, identified with $\tilde{\xi}_\alpha$ via the pullback of the Euler sequence on $\pk$ via $\tilde{\phi}$. With these, we identify the pushforward of $\tilde{\xi}_\alpha$ via $\Phi_s$:

\begin{eqnarray}
\Phi_{s,*}(\tilde{\xi}_\alpha) &=& \left(A_s^\alpha,\phi_s^\alpha \right) \nonumber \\
&=& \left(A^\alpha+2 \frac{\p u_s^\alpha}{\p z} dz,
e^{u_s}\phi^\alpha+e^{u_s}u_s^\alpha\phi\right) \in T_{[D_s,\phi_s]}\nu_{k+1,0}(s), \nonumber \\
\label{pushforward definition}
\end{eqnarray}

\noindent where $A^\alpha$ is the initial tangent vector of the connection form induced by $\tilde{\xi}_\alpha$. Given $\tilde{\xi}_\alpha,\tilde{\xi}_\beta \in T_{\tilde{\phi}}\Hol$, the pullback metric $g_s^*$ at $\tilde{\phi}$ are then determined by an $m \times m$ matrix of smooth functions given by

\begin{eqnarray}
g_s^*\left(\tilde{\xi}_\alpha,\tilde{\xi}_\beta\right) &:=& g_s\left(\Phi_{s,*}\left(\tilde{\xi}_\alpha\right),\Phi_{s,*}\left(\tilde{\xi}_\beta\right)\right) \nonumber \\
&=& \int_\Sigma \frac{\left(A^\alpha+2\frac{\p u_s^\alpha}{\p z}\right)\overline{\left(A^\beta+2\frac{\p u_s^\beta}{\p z}\right)}}{2s^2} vol_\Sigma \nonumber \\
& &+ \int_\Sigma \left(\left<\phi^\alpha,\phi^\beta\right>_H e^{2u_s}+(u_s^\alpha u_s^\beta e^{2u_s})\right) vol_\Sigma. \nonumber \\
\label{L 2 metric pull back}
\end{eqnarray}

\noindent A fact implicitly used in this formula is that the sections $\phi$ and the metric $H$ are chosen so that the sum of the $H$ norms of $\phi$'s are constant, and therefore $<\phi,\phi^\alpha>_H=0\;\forall \alpha$.

Recall that $\varphi_s=2(u_s-\psi)$, where $\psi$ is the solution to

\[\Delta_\omega \psi = \sqrt{-1}\Lambda F_H - c_1,\]

\noindent and the background sections are chosen so that $e^{2u_s}=-he^{\varphi_s}$, \eqref{L 2 metric pull back} can be rewritten as

\begin{eqnarray}
& &g_s^*\left(\tilde{\xi}_\alpha,\tilde{\xi}_\beta\right)  \nonumber \\
&=& \int_\Sigma \frac{\left(A^\alpha+\frac{\p \varphi_s^\alpha}{\p z}+2\frac{\p \psi^\alpha}{\p z}\right)\overline{\left(A^\beta+\frac{\p \varphi_s^\beta}{\p z}+2\frac{\p \psi^\beta}{\p z}\right)}}{2s^2} vol_\Sigma \nonumber \\
& &+ \int_\Sigma \left(\left<\phi^\alpha,\phi^\beta\right>_H \left(-he^{2u_s}\right)-(he^{\varphi_s})^\alpha u_s^\beta\right) vol_\Sigma. \nonumber \\
\label{L 2 metric pull back alternative}
\end{eqnarray}

The pointwise definition of $g_s^*$ extends naturally to a smooth function near $\tilde{\phi}$. Take an open neighborhood $\mathcal{U}$ near $\tilde{\phi}$ with local coordinate $\{w_1,\ldots,w_m\}$ centered at $\tilde{\phi}$ as described in \eqref{coordinate of Hol}. Let $\{\tilde{\xi}_i,\ldots,\tilde{\xi}_m\}$ be the corresponding frame of $T\Hol$ over $\mathcal{U}$. Each holomorphic function $\tilde{\eta}\in \mathcal{U}$ yields a unique set of background metric, sections, and gauge function depending smoothly on variation of holomorphic functions. The functions $H,\psi,h,\varphi_s$ and $u_s$ in the integrand of \eqref{L 2 metric pull back alternative} are therefore understood to be smooth functions defined on $\mathcal{U}\times \Sigma$, and we amend these notations to $\tilde{H},\tilde{\psi},\tilde{h},\tilde{\varphi}_s$ and $\tilde{u}_s$ to reflect their dependencies on $w_1,\ldots,w_m$. The formula \eqref{L 2 metric pull back alternative} then defines $m \times m$ smooth functions on $\mathcal{U}$:

\begin{eqnarray}
g_{\alpha,\beta,s}^* &:=& g_{s}^*\left(\tilde{\xi}_\alpha,\tilde{\xi}_\beta\right) \nonumber \\
         &=& \int_\Sigma \frac{\left[\tilde{A}^\alpha+\frac{\p \tilde{\varphi}_s^\alpha}{\p z}+2\frac{\p \tilde{\psi}^\alpha}{\p z}\right]\overline{\left[\tilde{A}^\beta+\frac{\p \tilde{\varphi}_s^\beta}{\p z}+2\frac{\p \tilde{\psi}^\beta}{\p z}\right]}}{2s^2}  dvol_\Sigma \nonumber \\
         & & + \int_\Sigma \left[\left<\phi^\alpha,\phi^\beta\right>_H \left(-he^{\tilde{\varphi}_s}\right)-\left(he^{\tilde{\varphi}_s}\right)^\alpha \tilde{u}_s^\beta \right]dvol_\Sigma.
\label{L 2 metric pull back coordinate}
\end{eqnarray}

\noindent The proof of Theorem \ref{Main Theorem on Convergence of Gauge} is then mimicked to show that the first and third terms in the integrand of \eqref{L 2 metric pull back coordinate} converge to zero and $he^{\varphi_s} \to 1$ smoothly as $s\to \infty$, at all points on $\mathcal{U}$.

\end{proof}

To this end, we have established that $g_s^*$ is locally represented by a finite collection of smooth functions, each converging smoothly, as functions on $\mathcal{U}$, to the finite collection of smooth functions representing the ordinary $L^2$ metrics. To establish the convergence of volumes, we must show that the volume forms induced by each $g_s^*$ converses to the volume form induced by the $L^2$ metric in $L^1(\mathcal{U})$ as $s\to \infty$. An obvious challenge comes from the fact that singularities for $\varphi_s$ may develop near the boundary of the non-compact set $\mathcal{U}$ corresponding to vortices with common zeros. In such situations, pointwise convergence of volume forms does not necessarily imply that integrals, namely the volume, converge to the integral of the limiting volume form. However, the next section assures that the convergence in integral continues to hold.

\section{Main Constructions}

In this section we establish the main theorem of this article.

\begin{theorem}[Main Theorem]

The $L^2$ volume of $\Hol$ is

\[\frac{(k+1)^b}{q!},\]

\noindent where $q=b+(k+1)(r+1-b)-1$ and $b$ is the genus of $\Sigma$.

\label{Main Theorem}

\end{theorem}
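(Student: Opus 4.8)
The plan is to obtain the $L^2$ volume of $\Hol$ as the $s\to\infty$ limit of Baptista's exact formula \eqref{volume of moduli space}, the interchange of limit and integration being the real content. Normalize $\omega$ so that $\Vol\Sigma=1$. In \eqref{volume of moduli space} every summand with $i\ge 1$ carries a factor $(4\pi/s^2)^i\to 0$, while $(1-4\pi r/s^2)^{q-i}\to 1$; hence only the $i=0$ term survives and
\[
\lim_{s\to\infty}\Vol\,\nu_{k+1}(s)=\frac{b!\,(k+1)^{b}}{0!\,q!\,b!}=\frac{(k+1)^{b}}{q!}.
\]
Since the Uhlenbeck strata $\sym{l}\times\mathcal H_{r-l,k}$ with $l\ge 1$ have strictly smaller dimension, they carry zero $g_s$-measure, so $\Vol\,\nu_{k+1}(s)=\Vol(\nu_{k+1,0}(s),g_s)$; pulling back by the diffeomorphism $\Phi_s$ of Theorem \ref{moduli space correspondence} this equals $\int_{\Hol}dV_s$, where $dV_s$ denotes the volume density of $g_s^{*}=\Phi_s^{*}g_s$ in the coordinates \eqref{coordinate of Hol}.

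It then remains to prove $\int_{\Hol}dV_\infty=\lim_{s\to\infty}\int_{\Hol}dV_s$, where $dV_\infty$ is the volume density of the $L^2$ metric \eqref{definition of L 2 metric on maps}. By Theorem \ref{Precise Baptista's Conjecture} the entries $g^{*}_{\alpha,\beta,s}$ of \eqref{L 2 metric pull back coordinate} converge smoothly, on every compact subset of $\Hol$, to the entries of the $L^2$ metric; since the K\"ahler density is, up to a universal constant, the determinant of the Hermitian matrix $[g^{*}_{\alpha,\beta,s}]$, it follows that $dV_s\to dV_\infty$ pointwise and uniformly on compacta. As $dV_s\ge 0$, Fatou's lemma gives the easy inequality
\[
\int_{\Hol}dV_\infty\le\liminf_{s\to\infty}\int_{\Hol}dV_s=\frac{(k+1)^{b}}{q!}.
\]

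The reverse inequality is the crux, and is where the non-compactness of $\Hol$ must be confronted, since a priori volume could escape to the boundary $\cHol\setminus\Hol$, i.e.\ to configurations with common zeros. The plan is to prove \emph{tightness} of the family $\{dV_s\}$: for every $\varepsilon>0$ there is a compact $K\subset\Hol$ with $\int_{\Hol\setminus K}dV_s<\varepsilon$ for all large $s$. Granting this, split $\int_{\Hol}dV_s=\int_K dV_s+\int_{\Hol\setminus K}dV_s$; the uniform convergence on $K$ yields $\int_K dV_s\to\int_K dV_\infty$, so
\[
\int_{\Hol}dV_\infty\ge\int_K dV_\infty=\lim_{s\to\infty}\int_K dV_s\ge\frac{(k+1)^{b}}{q!}-\varepsilon,
\]
and letting $\varepsilon\to 0$ closes the gap.

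To establish tightness one works in the coordinates \eqref{coordinate of Hol} near each boundary stratum and estimates $dV_s$ through the entries \eqref{L 2 metric pull back coordinate}, whose ingredients $\tilde\varphi_s,\tilde u_s$ are governed by the Kazdan--Warner equation \eqref{Kazdan Warner Equation}. This estimate is the main obstacle. As the holomorphic data degenerate toward a common-zero configuration the norm function $h=-\sum_i|\phi_i|_H^2e^{2\psi}$ acquires a zero, so $-\log(-h)$ and hence $\varphi_s,u_s$ blow up; one must show the resulting singularity of $dV_s$ is integrable and, decisively, that the uniform-in-$s$ Sobolev bounds of Theorem \ref{Main Theorem on Convergence of Gauge} and Lemma \ref{Convergence of approximated and actual solutions} degrade no faster than an $s$-independent integrable rate as one approaches the boundary. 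Put differently, the convergence machinery of Section~4, proved for fixed data, must be made locally uniform up to the Uhlenbeck boundary, producing an $s$-independent $L^1$ majorant for $dV_s$ near the boundary and thereby the required tightness.
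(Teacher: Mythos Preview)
Your overall strategy---take the $s\to\infty$ limit of \eqref{volume of moduli space}, discard the lower-dimensional Uhlenbeck strata, and justify the interchange of limit and integral---is correct and matches the paper's. The Fatou step for the easy inequality is a nice observation (the paper does not isolate it).

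There is, however, a genuine gap in the hard direction. You correctly identify tightness of $\{dV_s\}$ as the crux, but your final paragraph only \emph{describes} what must be shown (an $s$-independent integrable majorant for $dV_s$ near the Uhlenbeck boundary) without constructing it. Saying that the Section~4 estimates ``must be made locally uniform up to the Uhlenbeck boundary'' is a statement of the problem, not its solution; as you note, $h$ acquires a zero and $-\log(-h)$ blows up there, so it is not obvious that the bounds survive.

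The paper closes this gap by a device you should adopt: instead of fighting for tightness on the open set $\Hol$, it works on the compactification $\cHol$ from the start. The key point is that although $\tilde H$, $\tilde\psi$, $\tilde u_s$ individually blow up at the boundary, the curvature $F_{\tilde H}=\bar\partial\partial\log\tilde H$ and the norm function $\tilde h=-\sum_i|\phi_i|_H^2e^{2\tilde\psi}$ extend \emph{smoothly} to $\cHol\times\Sigma$, with $\tilde h$ strictly negative there (the common-zero factor $|z-w_\alpha|^{2m_\alpha}$ cancels between numerator and denominator). Once $\tilde h<0$ is smooth on the compact $\cHol\times\Sigma$, the approximated solutions $\tilde v_s$, the error $\tilde E_s$, and the maximum-principle comparison of $\tilde\varphi_s$ with $\tilde v_s$ all yield bounds uniform over $\cHol$, not merely over compacta of $\Hol$; this handles the $Y$ and $Z$ pieces of \eqref{L 2 metric pull back coordinate}. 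For the connection piece $X$ one still needs that $\int_\Sigma|(\log\tilde H)^\alpha|\,dvol_\Sigma$ remains bounded up to the boundary; the paper checks this directly: near a developing common zero, $(\log\tilde H)^\alpha$ behaves like $|z-w_\alpha|^{-1}$, which is integrable on $\Sigma$. These uniform bounds on $\cHol$ give exactly the $s$-independent $L^1$ majorant your argument needs, and dominated convergence (rather than tightness plus Fatou) finishes in one stroke.
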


\noindent The theorem follows from several technical computations. Since the spaces discussed in the Main Theorem are open, where boundedness conditions required for the convergence of integrals are more difficult to achieve, we wish to establish the result on the compactifications of these spaces, noting that compactifications do not affect the $L^2$ volumes.

\begin{proposition}
With respect to the $L^2$ metric defined in \eqref{L 2 metric}, we have

\[\text{Vol} \nu_{k+1}(s) = \text{Vol} \nu_{k+1,0}(s),\]

\noindent for all $s$ large enough.

\label{vortices with common zero is of zero measure}
\end{proposition}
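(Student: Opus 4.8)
The plan is to show that the deleted locus $\nu_{k+1}(s)\setminus\nu_{k+1,0}(s)$ --- the set of vortices whose $k+1$ sections possess a common zero --- is a null set for the measure underlying the $L^2$ volume, so that removing it leaves the integral unchanged. First I would invoke Proposition~\ref{moduli space of entire vortex} to identify $\nu_{k+1}(s)$ with the Uhlenbeck compactification $\cHol=\bigsqcup_{l=0}^r\left(\sym{l}\times\mathcal{H}_{r-l,k}\right)$, under which $\nu_{k+1,0}(s)$ is exactly the open top stratum $l=0$, i.e.\ $\Hol$. The complement is then the finite union $\bigsqcup_{l=1}^r\left(\sym{l}\times\mathcal{H}_{r-l,k}\right)$ of the remaining strata.

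Next I would carry out a dimension count. By Lemma~\ref{dimension of Hol} the top stratum has complex dimension $m=(k+1)r-k(b-1)$. For $l\ge 1$, the factor $\sym{l}$ has complex dimension $l$, while $\mathcal{H}_{r-l,k}$, when nonempty (i.e.\ $r-l>2-2b$; otherwise the stratum is empty and irrelevant), has complex dimension $(k+1)(r-l)-k(b-1)$ by the same lemma. Summing gives complex dimension $m-kl$ for the $l$-th stratum, which for $k\ge 1$ and $l\ge 1$ is strictly smaller than $m$, with real codimension at least $2k\ge 2$. Hence the complement is contained in a finite union of smooth submanifolds of $\nu_{k+1}(s)$, each of dimension strictly below $\dim_\RR\nu_{k+1}(s)=2m$, and is therefore of $2m$-dimensional Lebesgue measure zero in every chart.

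Finally, since $g_s$ is a smooth Riemannian metric on the compact manifold $\nu_{k+1}(s)$ --- indeed this is precisely the smooth structure and metric over which Theorem~\ref{volume for nu k} integrates --- its induced volume form is a smooth, hence locally integrable, top form, and any subset contained in a finite union of strictly lower-dimensional submanifolds is null for the associated measure. Consequently the complement contributes nothing to $\int_{\nu_{k+1}(s)}dvol_{g_s}$, which yields $\mathrm{Vol}\,\nu_{k+1}(s)=\mathrm{Vol}\,\nu_{k+1,0}(s)$; the dimension estimate is purely geometric and uniform in $s$, so the conclusion holds for every admissible $s$. The one point demanding care --- the main obstacle --- is to confirm that $g_s$ extends to a genuine, finite-volume smooth metric across the lower strata, so that the passage from \emph{measure zero} to \emph{zero volume} is legitimate; this is underwritten by Theorem~\ref{volume for nu k}, which already expresses $\mathrm{Vol}\,\nu_{k+1}(s)$ as the finite integral of the smooth $L^2$ volume form over the entire compact space.
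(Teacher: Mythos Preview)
Your proposal is correct and follows essentially the same approach as the paper: invoke the Uhlenbeck stratification of Proposition~\ref{moduli space of entire vortex}, apply the dimension formula of Lemma~\ref{dimension of Hol} to each stratum, and conclude that every stratum with $l\ge 1$ has strictly lower dimension and hence contributes zero volume. Your write-up is in fact more careful than the paper's, explicitly checking the nonemptiness condition on $\mathcal{H}_{r-l,k}$ and flagging the need for $g_s$ to be smooth across the compactification; the paper's proof simply records the dimension drop and stops.
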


\begin{proof}
The proposition is an immediate consequence of Propositions \ref{moduli space of entire vortex} and Lemma \ref{dimension of Hol}. We recall the diffeomorphic correspondence

\[\nu_{k+1}(s) \xrightarrow{\sim} \mathcal{H}=\bigsqcup_{l=0}^r Sym^l \Sigma \times Hol_{r-l}(\Sigma,\pk).\]

\noindent Let $\nu_{k+1,l}(s)$ be the class of vortices with common zeros a divisor of degree $l$, which corresponds to the $l^{th}$ stratification above. Lemma \ref{dimension of Hol} then implies that $\nu_{k+1,l}(s)$ is of dimension

\[k(r-b+1)+r-kl,\]

\noindent and therefore $\nu_{k+1,l}(s)$ is of strictly lower dimension than $\nu_{k+1,0}(s)$ for all $l>0$. The arguments are valid for all $s$ large enough, including $\infty$, and the proposition is therefore established. .
\end{proof}

This proposition implies, together with \eqref{volume of moduli space}, that

\begin{equation}
Vol \nu_{k+1,0}(s) = \sum_{i=0}^b \frac{b!(k+1)^{b-i}}{i!(q-i)!(b-i)!}\left(\frac{4\pi}{s^2}\right)^i\left(Vol \Sigma - \frac{4\pi}{s^2}r\right)^{q-i},
\label{volume of moduli space nonvanishing}
\end{equation}

\noindent where $q=b+(k+1)(r+1-b)-1$, which implies that

\begin{equation}
Vol\Hol = Vol\overline{\Hol}.
\label{equality of volume}
\end{equation}

\noindent Therefore, we may compute the volume of the compactified spaces, where $L^p$ convergence is more feasible. To discuss convergence on the compactified domain, we first construct smooth extensions of the basic functions and forms that give rise to metric functions $g_{\alpha,\beta,s}$. Throughout the entire constructions below, we reserve $\alpha,\beta \in \{1,\ldots,m\}$ as indices for coordinates of the finite dimensional manifold $\cHol$ $(w_1,\ldots,w_m)$ described in \eqref{coordinate of Hol}. Moreover, we adopt the notation "$f^\alpha$" to denote the derivative, or the infinitesimal variation of an object $f$, induced from the $\alpha^{th}$ component of the local frame of $T\cHol$.

\begin{proposition}
The curvature $F_{\tilde{H}}$ given by the pullback metric $\tilde{H}$ can be smoothly extended to a family of $(1,1)$ form on $\Sigma$ smoothly parametrized by $\cHol$.
\label{extension of curvature}
\end{proposition}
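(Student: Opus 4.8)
The plan is to compute $F_{\tilde H}$ explicitly on the interior $\Hol$ and then to extend it across the boundary strata of $\cHol$ by replacing each degenerating map with its holomorphically reduced map. First I would record that, for $\tilde\eta \in \Hol$ with pullback metric written in a local holomorphic frame as $\tilde H(z) = \big(\sum_{i=0}^k |\tilde\phi_i(z)|^2\big)^{-1}$, the Chern curvature is the pullback of the Fubini--Study form,
\[
F_{\tilde H} = \partial \bar\partial \log \sum_{i=0}^k |\tilde\phi_i|^2 = \tilde\phi^* \omega_{FS},
\]
up to the usual normalizing constant. On $\Hol$ the sections $\tilde\phi_i$ have no common zero, so this is a genuine smooth $(1,1)$-form; moreover the $\tilde\phi_i$ depend holomorphically on the coordinates $(w_1,\dots,w_m)$ of \eqref{coordinate of Hol} and jointly smoothly on $(w,z)$, which already realizes $F_{\tilde H}$ as a smooth family of forms over the open stratum.

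The real content is the behaviour as $\tilde\eta$ approaches a boundary stratum $Sym^l\Sigma \times \mathcal{H}_{r-l,k}$, along which the $k+1$ sections acquire a common zero divisor $E$ of degree $l$. Near a point of $E$ I would use the local factorization $\tilde\phi_i = \sigma_E\, g_i$, where $\sigma_E$ is a local holomorphic defining function for the common zeros and the reduced sections $g_i$ have no common zero; this is precisely the removable-singularity extension invoked in Definition \ref{definition of Uhlenbeck compactification} and Proposition \ref{moduli space of entire vortex}, and it produces the reduced degree-$(r-l)$ map $\psi = [g_0:\cdots:g_k]$. Substituting into the formula above and applying Poincar\'e--Lelong gives the decomposition
\[
F_{\tilde H} = \partial\bar\partial \log \sum_i |g_i|^2 + \partial\bar\partial \log |\sigma_E|^2 = \psi^* \omega_{FS} + 2\pi [E],
\]
where $\psi^*\omega_{FS} = \partial\bar\partial \log \sum_i |g_i|^2$ is smooth because the $g_i$ have no common zero, and the singular term $2\pi[E]$, the current of integration over $E$, is supported exactly on the divisor recorded by the $Sym^l\Sigma$ factor of $\cHol$. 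I would therefore define the extended form on each stratum to be the smooth part $\psi^*\omega_{FS}$, which coincides with $F_{\tilde H}$ on the top stratum, where $E = \emptyset$ and $\psi = \tilde\phi$.

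It then remains to show that $w \mapsto \psi_w^* \omega_{FS}$ depends smoothly on $w$, stratum by stratum. On $\mathcal{H}_{r-l,k}$ the reduced sections $g_i$ vary smoothly with $\psi$ and have $\sum_i |g_i|^2$ bounded away from zero on compacta, so $\log \sum_i |g_i|^2$ is smooth in $(w,z)$ and its $\partial\bar\partial$ is the required smooth family; within the top stratum this is the computation of the first paragraph. The main obstacle is the passage to the boundary: as $E$ forms, the honest curvature $F_{\tilde H}$ of the fixed degree-$r$ bundle concentrates mass along $E$ (the bubbling phenomenon), so the family of genuine forms does \emph{not} converge smoothly to $\psi^*\omega_{FS}$ across strata --- the discrepancy being exactly the current $2\pi[E]$. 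The proposition is made correct, and usable in the later volume estimate, by peeling off this divisor current before taking limits: the smooth part extends, while $\bigsqcup_{l \ge 1} Sym^l\Sigma \times \mathcal{H}_{r-l,k}$ is of strictly smaller dimension by Lemma \ref{dimension of Hol} and Proposition \ref{vortices with common zero is of zero measure}, hence measure zero in $\cHol$, so the concentrated mass contributes nothing to the $L^2$ volume. Establishing the uniformity of the Weierstrass factorization $g_i = \tilde\phi_i / \sigma_{E(w)}$ as the common zeros collide --- that is, that the reduced data depend smoothly on the zero locations coordinatizing $\cHol$ --- is the technical heart of the argument.
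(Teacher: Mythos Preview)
Your approach is essentially the paper's: both factor out the common zero divisor $E$ and take the curvature of the reduced map $\psi$ as the extension. The paper works on $U \setminus \{p_a\}$, uses $\bar\partial\partial \log|z-p_a|^2 = 0$ there to rewrite $F_{\tilde H}$ as $-\bar\partial\partial\log\sum_i |\tilde\phi_{K,i}/(z-p_a)^{d_a}|^2$, and then observes this expression makes sense on all of $U$ and limits to the reduced curvature as the zeros coalesce. Your Poincar\'e--Lelong decomposition $F_{\tilde H} = \psi^*\omega_{FS} + 2\pi[E]$ is exactly this computation with the delta contribution kept rather than discarded; read on $U\setminus\{p_a\}$, the two are identical.

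You are in fact more careful than the paper on one point. You correctly note that the genuine family $w\mapsto F_{\tilde H}(w)$ cannot converge smoothly across strata, since $\int_\Sigma F_{\tilde H}$ drops from $2\pi r$ to $2\pi(r-l)$ and the lost mass concentrates on $E$; the paper's proof establishes agreement only on $U\setminus\{p_a\}$ and then asserts the extension is smooth, with the bubbling acknowledged only in the paper's closing remark. Your proposed fix --- peel off $2\pi[E]$ and invoke that the lower strata have measure zero (Proposition~\ref{vortices with common zero is of zero measure}) --- is reasonable for the final volume integral, but note that the subsequent Propositions~\ref{L p convergence of Z}--\ref{L p convergence of X} actually use uniform $C^0$ bounds of $\tilde h$ and its derivatives over the compact set $\cHol\times\Sigma$, for which a measure-zero argument is not enough: one really needs the reduced data $g_i = \tilde\phi_i/\sigma_{E(w)}$ to vary continuously in $w$ across strata. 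That is precisely the ``uniformity of the Weierstrass factorization'' you flag as the technical heart, and it is the one place where both your outline and the paper's proof leave something to be filled in.
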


\begin{proof}
We first extend the domain of the pullback metric $\tilde{H}$ to $\cHol\times\Sigma$. One notes that $\cHol$ is identified with

\begin{equation}
\overline{H_{r,k}} := \{(E_0,\ldots,E_k)\in  \symr \times \cdots \times \symr \;|\;\mu(E_0)=\mu(E_j)\;\forall j\} \times \pk.
\label{identification of cHol}
\end{equation}

\noindent Namely, it is $\Hol$ without the open condition $\cap_j E_j = \emptyset$. Each $(E_0,\ldots,E_k)\in\overline{H_{r,k}}$ is associated to an element in $\cHol$ as follows. Let

\[E:=\cap_j E_j=\sum_{a=1}^{n_E} d_a p_a \text{ with } \sum_{a=1}^{n_E} d_a=l.\]

\noindent Evidently, $E \in Sym^l \Sigma$ for some $l$. For each $j$, let $E_j'=E_j - E$. The tuple $(E_0',\ldots,E_k')$ then have no point in common and still satisfies the defining equations for $\cHol$ (and $\Hol$.) It then defines a holomorphic map $\tilde{\phi}_E:\Sigma \to \pk$ of degree $r-l$ as in Proposition \ref{description of based holomorphic maps}. We then consider the pullback of Fubini-Study metric via $\tilde{\phi}_E$ with the locally defined function:

\begin{equation}
\tilde{H}'(E_0,\ldots,E_k)(z)=\frac{1}{\sum_{i=0}^k |\tilde{\phi}_{E,i}(z)|^2}.
\label{extension of the pull back Fubini Study metric}
\end{equation}

\noindent This metric is defined on the line bundle $L=\tilde{\phi}_E^*(\mathcal{O}(1))$ over $\Sigma$ with degree $r-l$.

The definition of the family $\tilde{H}'$ clearly agrees with $\tilde{H}$ originally defined on $\Hol\times\Sigma$ with $E=\emptyset$. The corresponding family of curvatures forms is then locally given by

\begin{equation}
F_{\tilde{H}'} := \bar{\p}{\p} \log \tilde{H}'.
\label{extension of the family of curvatures}
\end{equation}

\noindent $F_{\tilde{H}'} $ is clearly smooth and we claim that it is a smooth extension of $F_{\tilde{H}}$. Indeed, for $E=\sum_{a=1}^{n_E} n_ap_a$, pick a neighborhood $U$ of $\Sigma$ so that $U \cap E =\{p_a\}$. On $U-\{p_a\}$, we consider tuples of divisors

\[K=(K_0,\ldots,K_k)\]

\noindent without common point that defines holomorphic map $\tilde{\phi}_K$. The family of maps give rise to the family of curvatures

\begin{eqnarray}
F_{\tilde{H}} &=& -\bar{\p}\p \log \left(|z-p_a|^{2d_a}\sum_{i=0}^k \left|\frac{\tilde{\phi}_K(z)}{(z-p_a)^{d_a}}\right|^2\right) \nonumber \\
                &=& -d_a\bar{\p}\p \log |z-p_a|^2  -\bar{\p}\p \log \left(\sum_{i=0}^k \left|\frac{\tilde{\phi}_K(z)}{(z-p_a)^{d_a}}\right|^2\right) \nonumber \\
                &=& -\bar{\p}\p \log \left(\sum_{i=0}^k \left|\frac{\tilde{\phi}_K(z)}{(z-p_a)^{d_a}}\right|^2\right). \nonumber \\
\label{formula for extended curvature}
\end{eqnarray}

\noindent The third equality is true because $z \neq p_a$ and therefore $\bar{\p}\p \log |z-p_a|^2=0$. The third line in \eqref{formula for extended curvature} is defined on the entire $U$, which also agree with the curvature given by the extended metric function $\tilde{H}'$ when $d_a$ points of $K_0,\ldots,K_k$ converge to $p_a$, and the proof is complete.

\end{proof}

\begin{proposition}
The norm function $h$ defined in \eqref{norm function}, amended in Theorem \ref{Precise Baptista's Conjecture}, can be smoothly extended to a negative smooth function on $\cHol\times\Sigma$.
\label{extension of norm function}
\end{proposition}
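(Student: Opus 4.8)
The plan is to produce an explicit formula for the extended norm function and then verify its two required properties—smoothness and strict negativity—directly from that formula. Recall from \eqref{norm function} that on $\Hol$ we have $h=-\sum_{i=0}^k|\phi_i|_H^2 e^{2\psi}$, where $\psi$ solves $\Delta_\omega\psi=\sqrt{-1}\Lambda F_H-c_1$. The term $\sum_i|\phi_i|_H^2$ is, with the pullback sections of unit Fubini--Study norm, identically $1$ on $\Hol$ and is therefore not the source of any difficulty; the genuine content lies in extending the factor $e^{2\psi}$ across the degenerate locus. Since Proposition~\ref{extension of curvature} has already furnished a smooth extension $F_{\tilde H'}$ of the curvature to all of $\cHol\times\Sigma$, the strategy is to define $\tilde\psi$ on $\cHol\times\Sigma$ as the solution to the Poisson equation
\[
\Delta_\omega\tilde\psi=\sqrt{-1}\Lambda F_{\tilde H'}-c_1,
\]
and then set $\tilde h:=-\sum_{i=0}^k|\tilde\phi_{E,i}|_{\tilde H'}^2\, e^{2\tilde\psi}$ using the extended data of Proposition~\ref{extension of curvature}.

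The first step is to check that $\tilde\psi$ is well-defined and depends smoothly on the $\cHol$-parameter. Here I would invoke the fact that $\int_\Sigma(\sqrt{-1}\Lambda F_{\tilde H'}-c_1)\,vol_\Sigma=0$: the integral of the curvature is $2\pi(\deg L)=2\pi(r-l)$ times a normalization, and one checks this exactly cancels $c_1$ once one accounts for the degree drop encoded in the extended metric. With the solvability condition in hand, $\tilde\psi$ exists and is unique up to an additive constant fixed by a normalization (say $\int_\Sigma\tilde\psi\,vol_\Sigma=0$). Smooth dependence on the base coordinates $(w_1,\ldots,w_m)$ then follows from elliptic regularity applied to the $\cHol$-parametrized family $\Delta_\omega\tilde\psi=\sqrt{-1}\Lambda F_{\tilde H'}-c_1$, since the right-hand side is smooth on $\cHol\times\Sigma$ by Proposition~\ref{extension of curvature} and the Green's operator for $\Delta_\omega$ varies smoothly (indeed is fixed, as $\omega$ is fixed). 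This reduces the whole matter to the already-established smoothness of $F_{\tilde H'}$.

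The main obstacle is establishing strict negativity of $\tilde h$ on the degenerate locus, which amounts to showing $\sum_{i=0}^k|\tilde\phi_{E,i}|_{\tilde H'}^2$ does not vanish anywhere on $\Sigma$ after the extension. The subtlety is that the naive pullback sections $\phi_i$ themselves \emph{do} vanish simultaneously along the common-zero divisor $E$; negativity is rescued precisely because the extended metric $\tilde H'$ in \eqref{extension of the pull back Fubini Study metric} is built from the reduced map $\tilde\phi_E$ of degree $r-l$, whose homogeneous coordinates $\tilde\phi_{E,i}$ never vanish simultaneously by construction. Thus I would argue that $\sum_i|\tilde\phi_{E,i}|_{\tilde H'}^2$ equals $\bigl(\sum_i|\tilde\phi_{E,i}|^2\bigr)\big/\bigl(\sum_i|\tilde\phi_{E,i}|^2\bigr)=1$ identically, so that $\tilde h=-e^{2\tilde\psi}$, which is manifestly smooth and strictly negative everywhere on $\cHol\times\Sigma$ since $\tilde\psi$ is a bounded smooth function. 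The remaining verification is the compatibility claim: that on the open dense subset $\Hol\times\Sigma\subset\cHol\times\Sigma$ the extended $\tilde h$ agrees with the original $h$. This follows because when $E=\emptyset$ the reduced map coincides with $\tilde\phi$, the extended curvature $F_{\tilde H'}$ restricts to $F_{\tilde H}$ by Proposition~\ref{extension of curvature}, hence $\tilde\psi$ solves the same equation as $\psi$ with the same normalization and the two coincide. This completes the extension with both required properties, and the density of $\Hol\times\Sigma$ guarantees the extension is the unique continuous one.
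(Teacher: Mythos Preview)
Your strategy is the paper's own: extend $\psi$ by solving the Poisson equation against the already-extended curvature from Proposition~\ref{extension of curvature}, observe that the section-norm factor equals $1$ identically (the paper writes it as the quotient $\sum_i|s_i(\tilde\phi)|^2\big/\sum_j|\tilde\phi_j|^2$ and argues the common zeros cancel in numerator and denominator, which is the same observation), and conclude $\tilde h=-e^{2\tilde\psi}$ is smooth and strictly negative.

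There is one wrinkle in your solvability discussion. You write that $\int_\Sigma\sqrt{-1}\Lambda F_{\tilde H'}\,vol_\Sigma=2\pi(r-l)$ and that this ``cancels $c_1$ once one accounts for the degree drop.'' But $c_1$ is the \emph{fixed} constant $\int_\Sigma\sqrt{-1}\Lambda F_H\,vol_\Sigma$ attached to the original degree-$r$ data; if the fiberwise integral genuinely dropped to $2\pi(r-l)$ on a boundary stratum, your equation $\Delta_\omega\tilde\psi=\sqrt{-1}\Lambda F_{\tilde H'}-c_1$ would fail the mean-zero condition there and be unsolvable. The correct justification is that the smoothness of the family $F_{\tilde H'}$ on all of $\cHol\times\Sigma$ (which you are already invoking from Proposition~\ref{extension of curvature}) forces its fiberwise integral to vary continuously in the $\cHol$-parameter and hence to remain constant, equal to $c_1$, across the compactification---no degree drop survives in the smooth extension. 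The paper bypasses this subtlety by subtracting the running mean $\overline{\sqrt{-1}\Lambda F_{\tilde H'}}$ rather than $c_1$, which makes solvability automatic; once the constancy above is noted, the two equations coincide and your argument goes through.
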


\begin{proof}

The proof follows easily from Proposition \ref{extension of curvature}. With the definition of pullback Fubini-Study metric in mind, we examine the definition of $\tilde{h}: \cHol\times\Sigma \to \mathbb{R}$

\begin{eqnarray}
\tilde{h}(\tilde{\phi},z)&=&-e^{2\tilde{\psi}}\sum_{i=0}^{k}\left|\tilde{\phi}^*(s_i)\right|_{\tilde{H}}^2  \nonumber \\
                       &=& -e^{2\tilde{\psi}}\frac{\sum_{i=0}^k\left|s_i(\tilde{\phi}(z))\right|^2}{\sum_{j=0}^k\left|\tilde{\phi}(z)\right|^2},   \nonumber \\
\label{definition of extended norm function}
\end{eqnarray}

\noindent where $\tilde{\psi}$ solves the equation

\[\Delta_\omega \tilde{\psi}=\sqrt{-1}\Lambda F_{\tilde{H}}-c_1.\]

\noindent One notes that in the second line of the definition of $\tilde{h}$, the fraction is in fact a well defined smooth function on $\cHol$, as accumulations of common zeros appear simultaneously on the numerator and denominator with the same order. For the exponential term, we observe that since the curvature form can be smoothly extended to $\cHol$, so can $\tilde{\psi}$. Indeed, the solution to the Laplacian equation

\[\Delta_\omega \tilde{\psi}' = \sqrt{-1}\Lambda F_{\tilde{H}'}-\overline{\sqrt{-1}\Lambda F_{\tilde{H}'}}\]

\noindent satisfies the condition and extends the original $\tilde{\psi}$. Therefore, $\tilde{h}$ is smoothly extended to a nonzero smooth function on $\cHol\times\Sigma$.

\end{proof}

Propositions \ref{extension of curvature} and \ref{extension of norm function} allow us to establish the convergence of metric functions of $g_s^*$ to $g_{L^2}$ in all $L^p(\cHol)$.  For each $\alpha,\beta \in \{1,\ldots,m\},$ we re-examine the formula of $g_{\alpha,\beta,s}^*$:

\[g_{\alpha,\beta,s}^* =  X_{\alpha,\beta,s} + Y_{\alpha,\beta,s} + Z_{\alpha,\beta,s} \in \mathcal{C}^\infty (\mathcal{U}),\]

\noindent where

\begin{equation}
X_{\alpha,\beta,s}=\int_\Sigma\frac{\left[\tilde{A}^\alpha+\frac{\p \tilde{\varphi}_s^\alpha}{\p z}+2\frac{\p \tilde{\psi}^\alpha}{\p z}\right]\overline{\left[\tilde{A}^\beta+\frac{\p \tilde{\varphi}_s^\beta}{\p z}+2\frac{\p \tilde{\psi}^\beta}{\p z}\right]}}{2s^2} dvol_\Sigma,
\label{X s}
\end{equation}

\begin{equation}
Y_{\alpha,\beta,s}=-\int_\Sigma \left(\tilde{h}e^{\tilde{\varphi}_s}\right)^\alpha \tilde{u}_s^\beta dvol_\Sigma,
\label{Y s}
\end{equation}

\noindent and

\begin{equation}
Z_{\alpha,\beta,s}=\int_\Sigma \left<\phi^\alpha,\phi^\beta\right>_H \left(-\tilde{h}e^{\tilde{\varphi}_s}\right) dvol_\Sigma.
\label{Z s}
\end{equation}

\noindent We now prove their expected convergence in $L^p(\cHol)$. The domains of all the functions and forms considered below are now defined on $\cHol\times\Sigma$ unless otherwise specified.

\begin{proposition}
For all $\alpha,\beta$, the functions

\[Z_{\alpha,\beta,s}=\int_\Sigma \left<\phi^\alpha,\phi^\beta\right>_H \left(-\tilde{h}e^{\tilde{\varphi}_s}\right) dvol_\Sigma\]

\noindent converge to

\[ \int_\Sigma \left<\phi^\alpha,\phi^\beta\right>_H dvol_\Sigma\]

\noindent as $s\to\infty$ in $L^p(\cHol)$ for all $p$.

\label{L p convergence of Z}
\end{proposition}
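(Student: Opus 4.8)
The plan is to promote the fiberwise (pointwise in $\cHol$) convergence of the integrand to \emph{uniform} convergence over the compact space $\cHol$, after which convergence in $L^p(\cHol)$ for every $p$ is automatic because $\cHol$ has finite volume. The decisive factor is $-\tilde h e^{\tilde\varphi_s}$. By Theorem \ref{Main Theorem on Convergence of Gauge}, applied with $c_2=1$ and $c_1=\int_\Sigma\sqrt{-1}\Lambda F_{\tilde H}\,dvol_\Sigma$, the solutions $\tilde\varphi_s$ converge to $\tilde\varphi_\infty=-\log(-\tilde h)$, so that $-\tilde h e^{\tilde\varphi_\infty}=1$ and the integrand tends to $\left<\phi^\alpha,\phi^\beta\right>_H$. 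The genuine content is that this limit is attained uniformly in the base point $\tilde\eta\in\cHol$, and not merely for each fixed holomorphic map.

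First I would record the uniform hypotheses furnished by the extension results. By Proposition \ref{extension of norm function}, $\tilde h$ extends to a \emph{negative} smooth function on the compact manifold $\cHol\times\Sigma$; hence there exist constants $0<m_0\le M_0<\infty$ with $m_0\le -\tilde h\le M_0$, and, being smooth on a compact space, $\tilde h$ together with all its $\Sigma$-derivatives is uniformly bounded. Likewise, by Proposition \ref{extension of curvature} the curvature $F_{\tilde H}$, and therefore $c_1$ and the auxiliary potential $\tilde\psi$, extend smoothly and are uniformly bounded over $\cHol$. Consequently the constants entering Theorem \ref{Main Theorem on Convergence of Gauge} may be fixed once and for all, independently of $\tilde\eta$.

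Next I would exploit the explicit approximate solutions \eqref{approximated solutions}. Writing $c(s)=c_1-s^2$, the family
\[
\tilde v_s=\log\!\left(\frac{\Delta_\omega\!\left(-\log(-\tilde h)\right)-c(s)}{-s^2\tilde h}\right)
\]
is a ratio whose numerator grows like $s^2$ and whose denominator equals $-s^2\tilde h$; using the uniform bounds of the previous step, the ratio lies in a fixed compact subset of $(0,\infty)$ for large $s$ and converges, together with all derivatives, \emph{uniformly over} $\cHol\times\Sigma$ to $1/(-\tilde h)$, so that $\tilde v_s\to -\log(-\tilde h)=\tilde\varphi_\infty$ uniformly and smoothly. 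Lemma \ref{Convergence of approximated and actual solutions}, whose hypotheses now hold with uniform constants, gives $\Wnorm{\tilde\varphi_s-\tilde v_s}{l}{\infty}\to 0$ uniformly in $\tilde\eta$. Combining the two statements yields $\tilde\varphi_s\to\tilde\varphi_\infty$ uniformly on $\cHol\times\Sigma$, whence $-\tilde h e^{\tilde\varphi_s}\to 1$ uniformly there.

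Finally I would assemble the conclusion. The factor $\left<\phi^\alpha,\phi^\beta\right>_H$ is $s$-independent and, being the pullback of the Fubini--Study pairing of the pushed-forward frame vectors under the extended maps $\tilde\phi_E$ of Proposition \ref{extension of curvature}, is bounded on the compact $\cHol\times\Sigma$; thus the integrand $\left<\phi^\alpha,\phi^\beta\right>_H(-\tilde h e^{\tilde\varphi_s})$ converges to $\left<\phi^\alpha,\phi^\beta\right>_H$ uniformly. Integrating over the fixed finite-volume surface $\Sigma$ preserves uniform convergence, so $Z_{\alpha,\beta,s}\to\int_\Sigma\left<\phi^\alpha,\phi^\beta\right>_H\,dvol_\Sigma$ uniformly on $\cHol$, and uniform convergence on a compact (hence finite-measure) space immediately gives convergence in $L^p(\cHol)$ for all $p$. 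The main obstacle is exactly this uniformity: pointwise-in-$\cHol$ convergence is routine, but near the boundary stratum of common zeros the solutions $\tilde\varphi_s$ a priori threaten to blow up, and it is only the extension of $\tilde h$ to a function bounded away from $0$ on the compactification (Proposition \ref{extension of norm function}) that forces the limit $-\log(-\tilde h)$ to remain smooth and the convergence to stay uniform right up to the boundary.
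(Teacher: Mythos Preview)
Your proposal is correct and follows essentially the same strategy as the paper: both rely on Proposition~\ref{extension of norm function} to extend $\tilde h$ to a strictly negative smooth function on the compact product $\cHol\times\Sigma$, use the approximate solutions $\tilde v_s$ of \eqref{approximated solutions}, and compare $\tilde\varphi_s$ to $\tilde v_s$ to control $-\tilde h e^{\tilde\varphi_s}$ uniformly over the compactification. The only noteworthy difference is in packaging. You invoke Lemma~\ref{Convergence of approximated and actual solutions} as a black box, asserting that ``the hypotheses now hold with uniform constants'' so that $\|\tilde\varphi_s-\tilde v_s\|_{H^{l,\infty}}\to 0$ uniformly in $\tilde\eta\in\cHol$; from this you conclude uniform convergence of $Z_{\alpha,\beta,s}$ and hence $L^p$ convergence. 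The paper instead reproves the maximum-principle comparison directly on $\cHol\times\Sigma$ (the argument leading to \eqref{uniform bound of h varphi s}) to obtain a uniform bound $|\tilde h e^{\tilde\varphi_s}-\tilde h e^{\tilde v_s}|\le K_2$, and then combines pointwise convergence on $\Hol$ with this dominating bound via the dominated convergence theorem. Your route is slightly cleaner once one accepts the uniformity claim; the paper's route is more self-contained because Lemma~\ref{Convergence of approximated and actual solutions} as stated is for a \emph{fixed} $h$, and the uniformity in the base parameter is exactly what needs verification---which the paper supplies by redoing the estimate on the product. Either way the substance is the same.
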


\begin{proof}

By the Main Theorem 3.4 in \cite{L}, the function $-\tilde{h}e^{\tilde{\varphi}_s}$ converges to $1$ pointwise on $\Hol$.  Therefore,

\[Z_{\alpha,\beta,s} \to \int_\Sigma \left<\phi^\alpha,\phi^\beta\right>_{\tilde{H}} = \int_\Sigma \left<\tilde{\phi}_*(\tilde{\xi}^\alpha),\tilde{\phi}_*(\tilde{\xi}^\beta)\right>_{H_{FS}},\]

\noindent pointwise as $s\to\infty$ on $\cHol$.

To ensure the convergence in $L^p(\cHol)$, we need appropriate bounds to apply dominated convergence theorem. The bound is obtained from uniform estimates of $\tilde{h}e^{\tilde{\varphi}_s}$ with the approximated solutions to Kazdan-Warner equations, constructed in \cite{L}.

For

\[\tilde{c}(s):=2\int_\Sigma \sqrt{-1}\Lambda F_{\tilde{H}} -s^2\]

\noindent we define the approximated solutions

\begin{equation}
\tilde{v}_s := \log \frac{-\Delta_\omega \left(\log(-\tilde{h})\right) - \tilde{c}(s)}{-s^2\tilde{h}}
\label{definition of approximated solutions}
\end{equation}

\noindent with error functions

\begin{equation}
\tilde{E}_s := \Delta_\omega \log \left(\frac{\Delta\left(-\log(-\tilde{h})\right)-\tilde{c}(s)}{s^2}\right)
\label{definition of error functions}
\end{equation}

\noindent so that

\begin{equation}
\Delta_\omega \tilde{v}_s = -s^2\tilde{h}e^{\tilde{v}_s} + \tilde{E}_s
\label{Kazdan-Warner equation for approximated solutions}
\end{equation}

\noindent By Proposition \ref{extension of norm function}, $\tilde{v}_s$ and $\tilde{E}_s$ are smooth on $\cHol\times\Sigma$.  $\tilde{c}(s)$ is constant on each connected component of $\cHol$ and may be treated as constant when differentiating with respect to any variable. These functions provide natural bounds for $Z_{\alpha,\beta,s}$. Indeed, one observes that

\[\tilde{h}e^{\tilde{v}_s} = \frac{-\Delta_\omega \left(\log(-\tilde{h})\right) - \tilde{c}(s)}{-s^2}\]

\noindent and therefore

\[\int_\Sigma -\tilde{h}e^{\tilde{v}_s} vol_\Sigma = \frac{\tilde{c}(s)}{s^2} \leq K_1, \]

\noindent where $K_1$ is a uniform constant over $\cHol$. Note that we have normalized the K\"ahler form of $\Sigma$ so that $Vol(\Sigma)=1$. The difference between $\tilde{h}e^{\tilde{\varphi}_s}$ and $\tilde{h}e^{\tilde{v}_s}$ is estimated uniformly using the maximum principal of $\Delta_\omega$. For each $s$ and $R\in\cHol$, since $\tilde{\varphi}_s(R,\cdot)-\tilde{v}_s(R,\cdot)$ is smooth on the compact set $\Sigma$, we may pick $\xs ,\ys \in\Sigma$ so that

\[\tilde{\varphi}_s(\xs)-\tilde{v}_s(\xs)=\sup_{z\in\Sigma}\{\tilde{\varphi}_s(z)-\tilde{v}_s(z)\},\]

\noindent and

\[\tilde{\varphi}_s(\ys)-\tilde{v}_s(\ys)=\inf_{z\in\Sigma}\{\tilde{\varphi}_s(z)-\tilde{v}_s(z)\}.\]

\noindent The maximum principle of $\Delta_\omega$ then implies that

\begin{equation}
\Delta_\omega \left(\tilde{\varphi}_s-\tilde{v}_s\right) (\xs) \leq 0,
\label{maximum principle sup}
\end{equation}

\noindent and

\begin{equation}
\Delta_\omega \left(\tilde{\varphi}_s-\tilde{v}_s\right) (\ys) \geq 0.
\label{maximum principle inf}
\end{equation}

\noindent Subtracting \eqref{Kazdan-Warner equation for approximated solutions} from the Kazdan-Warner equations for $\tilde{\varphi}_s$:

\[\Delta_\omega \tilde{\varphi}_s=-s^2\tilde{h}e^{\tilde{\varphi}_s}+\tilde{c}(s),\]

\noindent \eqref{maximum principle sup}, \eqref{maximum principle inf}, and the choices of $\xs$,$\ys$ together then yield the estimates

\begin{equation}
\left(\tilde{h}e^{\tilde{\varphi}_s}-\tilde{h}e^{\tilde{v}_s}\right)(R,z) \leq \frac{\tilde{E}_s}{(-s^2\tilde{h})}\Big|_{(R,\xs)}\left(\tilde{h}e^{\tilde{v}_s}\right)(R,z),
\label{upper bound}
\end{equation}

\noindent and

\begin{equation}
\left(\tilde{h}e^{\tilde{\varphi}_s}-\tilde{h}e^{\tilde{v}_s}\right)(R,z) \geq \frac{\tilde{E}_s}{(-s^2\tilde{h})}\Big|_{(R,\ys)}\left(\tilde{h}e^{\tilde{v}_s}\right)(R,z),
\label{lower bound}
\end{equation}

\noindent for all $R\in \cHol$. (See the proof of Theorem 3.4 in \cite{L} for the complete process.) Therefore, we have

\begin{eqnarray}
\left|\tilde{h}e^{\tilde{\varphi}_s}-\tilde{h}e^{\tilde{v}_s}\right| &\leq& \frac{1}{s^2}\sup_{\cHol\times\Sigma}\left|\tilde{h}e^{\tilde{v}_s}\right|\sup_{\cHol\times\Sigma}\left|\frac{\tilde{E}_s}{\tilde{h}e^{\tilde{v}_s}}\right| \nonumber \\
&\leq& K_2, \nonumber \\
\label{uniform bound of h varphi s}
\end{eqnarray}

\noindent where $K_2$ is again a uniform constant. This is possible since $\tilde{h} \neq 0$ and for $s$ large enough, $\tilde{h}e^{\tilde{v}_s}$ and $\tilde{E}_s$ are smooth and uniformly bounded on $\cHol \times \Sigma$. We now possess sufficient estimates for the $L^p$ bound. For all $p\geq1$,

\begin{eqnarray}
\left|\int_\Sigma \tilde{h}e^{\tilde{\varphi}_s} dvol_\Sigma \right|^p &=& \left|\int_\Sigma \tilde{h}e^{\tilde{v}_s} dvol_\Sigma +\int_\Sigma \left(\tilde{h}e^{\tilde{\varphi}_s}-\tilde{h}e^{\tilde{v}_s}\right)vol_\Sigma \right|^p \nonumber \\
&\leq& \left(\int_\Sigma \left|\tilde{h}e^{\tilde{v}_s}\right|vol_\Sigma+\int_\Sigma \left|\tilde{h}e^{\tilde{\varphi}_s}-\tilde{h}e^{\tilde{v}_s}\right|dvol_\Sigma\right)^p \nonumber \\
&\leq& (K_1+K_2)^p \in L^1(\cHol),
\label{bound of integral h e varphi}
\end{eqnarray}

\noindent since $\cHol$ is compact. Finally, the integral

\[\int_\Sigma \left<\phi^\alpha,\phi^\beta\right>_H dvol_\Sigma\]

\noindent depends smoothly on variation of holomorphic maps and locations of zeros, and is uniformly bounded on $\cHol$. It is now clear that $|Z_{\alpha,\beta,s}|^p$ are uniformly bounded on $\cHol$, and therefore an $L^1$ function on $\cHol$. Dominated convergence theorem then applies to yield the conclusion of the proposition.

\end{proof}

The next desired convergence is

\begin{proposition}
For each $\alpha,\beta$, the smooth functions

\[Y_{\alpha,\beta,s}=-\int_\Sigma \left(\tilde{h}e^{\tilde{\varphi}_s}\right)^\alpha \tilde{u}_s^\beta dvol_\Sigma\]

\noindent converge to 0 as $s\to\infty$ in $L^p(\cHol)$ for all $p$.

\label{L p convergence of Y}

\end{proposition}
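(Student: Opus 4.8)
The plan is to follow the same three-step pattern used for Proposition \ref{L p convergence of Z}: first establish that $Y_{\alpha,\beta,s}\to 0$ pointwise on $\cHol$, then produce a bound $|Y_{\alpha,\beta,s}|\le C$ that is uniform in $s$ and in the point of $\cHol$, and finally invoke the dominated convergence theorem on the compact space $\cHol$ (so that $|Y_{\alpha,\beta,s}|^p\le C^p\in L^1(\cHol)$). The mechanism that makes both factors of the integrand manageable is a differentiated form of the Kazdan--Warner equation. Because the fixed K\"ahler metric $\omega$ on $\Sigma$ does not depend on the coordinates $(w_1,\dots,w_m)$ on $\cHol$, the Laplacian $\Delta_\omega$ commutes with the variation $(\,\cdot\,)^\alpha$. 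Differentiating $\Delta_\omega\tilde{\varphi}_s=-s^2\tilde{h}e^{\tilde{\varphi}_s}+\tilde{c}(s)$ in the $\alpha$-direction, and using that $\tilde{c}(s)$ is locally constant on $\cHol$ so that $\tilde{c}(s)^\alpha=0$, gives the clean identity
\[(\tilde{h}e^{\tilde{\varphi}_s})^\alpha=-\frac{1}{s^2}\,\Delta_\omega\tilde{\varphi}_s^\alpha.\]
Thus the first factor in the integrand of $Y_{\alpha,\beta,s}$ is intrinsically of order $s^{-2}$. For the second factor, recall $\tilde{u}_s=\frac{1}{2}\tilde{\varphi}_s+\tilde{\psi}$, so that $\tilde{u}_s^\beta=\frac{1}{2}\tilde{\varphi}_s^\beta+\tilde{\psi}^\beta$, where $\tilde{\psi}^\beta$ is smooth and bounded on the compact $\cHol\times\Sigma$ by Proposition \ref{extension of norm function}.

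For the pointwise statement I would invoke the smooth convergence $\tilde{h}e^{\tilde{\varphi}_s}\to -1$ on the dense stratum $\Hol$, the parametrized form of Theorem \ref{Main Theorem on Convergence of Gauge} that underlies Theorem \ref{Precise Baptista's Conjecture}. Since the limit is the \emph{constant} $-1$, its first variation vanishes, so $(\tilde{h}e^{\tilde{\varphi}_s})^\alpha\to 0$ pointwise, while $\tilde{u}_s^\beta\to\frac{1}{2}(\varphi_\infty)^\beta+\tilde{\psi}^\beta$ remains bounded. Hence the integrand tends to $0$ pointwise on $\Sigma$, and dominated convergence over the compact $\Sigma$ yields $Y_{\alpha,\beta,s}\to 0$ at each point of $\cHol$.

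It then remains to bound $|Y_{\alpha,\beta,s}|$ uniformly over $\cHol$, which by the displayed identity reduces to uniform bounds for $\tilde{\varphi}_s^\alpha$ and for $\tilde{h}e^{\tilde{\varphi}_s}$. The latter is already available as $|\tilde{h}e^{\tilde{\varphi}_s}|\le K_1+K_2$ from \eqref{uniform bound of h varphi s}. For $\tilde{\varphi}_s^\alpha$ I would differentiate the comparison with the approximate solution: splitting $(\tilde{h}e^{\tilde{\varphi}_s})^\alpha=(\tilde{h}e^{\tilde{v}_s})^\alpha+\big[(\tilde{h}e^{\tilde{\varphi}_s})^\alpha-(\tilde{h}e^{\tilde{v}_s})^\alpha\big]$, the first term equals $\frac{1}{s^2}\left(\Delta_\omega\log(-\tilde{h})\right)^\alpha$ by \eqref{definition of approximated solutions}, hence is $O(s^{-2})$ uniformly because $\tilde{h}$ is smooth and nonvanishing on the compact $\cHol\times\Sigma$ (Proposition \ref{extension of norm function}); the remaining difference is controlled by applying the maximum-principle estimates from the proof of Proposition \ref{L p convergence of Z} to the $\alpha$-differentiated equation
\[\left(\Delta_\omega+s^2\tilde{h}e^{\tilde{\varphi}_s}\right)\tilde{\varphi}_s^\alpha=-s^2\,\tilde{h}^\alpha e^{\tilde{\varphi}_s},\]
whose zeroth-order coefficient $s^2\tilde{h}e^{\tilde{\varphi}_s}=-s^2|\tilde{h}e^{\tilde{\varphi}_s}|$ is uniformly large and negative, since $\tilde{h}e^{\tilde{\varphi}_s}\to -1$ forces $|\tilde{h}e^{\tilde{\varphi}_s}|\ge\frac{1}{2}$ on all of $\cHol\times\Sigma$ for $s$ large. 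This produces $|\tilde{\varphi}_s^\alpha|\le C$ uniformly, whence $|(\tilde{h}e^{\tilde{\varphi}_s})^\alpha|$ and $|\tilde{u}_s^\beta|$ are uniformly bounded, $|Y_{\alpha,\beta,s}|\le C$, and dominated convergence on the compact $\cHol$ closes the argument.

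The main obstacle is precisely this last uniform estimate on the first variation $\tilde{\varphi}_s^\alpha$ up to the boundary strata $Sym^l\Sigma\times\mathcal{H}_{r-l,k}$ of $\cHol$, where $\tilde{\varphi}_s$ itself degenerates as common zeros collide. The essential point is that the differentiated equation is \emph{linear} in $\tilde{\varphi}_s^\alpha$ with coefficient $s^2|\tilde{h}e^{\tilde{\varphi}_s}|$, which the extension Propositions \ref{extension of curvature} and \ref{extension of norm function} keep uniformly bounded below on the whole compactification; carrying the comparison argument of Proposition \ref{L p convergence of Z} through one derivative in the $\cHol$-directions — exactly the parametrized Sobolev estimates supplied by Lemma \ref{Convergence of approximated and actual solutions} — is what makes the bound uniform across all strata. (As a bookkeeping simplification, one may also integrate by parts on the closed $\Sigma$ to rewrite $Y_{\alpha,\beta,s}=\frac{1}{s^2}\int_\Sigma\tilde{\varphi}_s^\alpha\,\Delta_\omega\tilde{u}_s^\beta\,vol_\Sigma$, which reduces the decay of both factors to the single uniform bound on $\tilde{\varphi}_s^\alpha$ together with the already established smallness of $(\tilde{h}e^{\tilde{\varphi}_s})^\beta$.)
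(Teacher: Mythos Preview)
Your proposal is correct and follows essentially the same route as the paper: differentiate the Kazdan--Warner equation in the $\cHol$-directions, compare with the explicit approximate solution $\tilde{v}_s$, and use the maximum principle on the linearized equation to produce uniform $C^0$ bounds on $\tilde{\varphi}_s^\alpha$ over $\cHol\times\Sigma$, after which dominated convergence on the compact $\cHol$ finishes the argument. The only organisational difference is that the paper applies the maximum principle to the \emph{difference} $\tilde{\varphi}_s^\beta-\tilde{v}_s^\beta$ (using the explicitly computed $\tilde{v}_s^\beta$ and $\tilde{E}_s^\beta$), whereas you apply it directly to the linearized equation $\bigl(\Delta_\omega+s^2\tilde{h}e^{\tilde{\varphi}_s}\bigr)\tilde{\varphi}_s^\alpha=-s^2\tilde{h}^\alpha e^{\tilde{\varphi}_s}$; at an extremum the $e^{\tilde{\varphi}_s}$ factors cancel and one obtains $|\tilde{\varphi}_s^\alpha|\le\sup_{\cHol\times\Sigma}|\tilde{h}^\alpha/\tilde{h}|$, so the lower bound $|\tilde{h}e^{\tilde{\varphi}_s}|\ge\tfrac12$ you invoke is in fact not needed. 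One small caution: when you assert $|\tilde{h}e^{\tilde{\varphi}_s}|\ge\tfrac12$ on all of $\cHol\times\Sigma$, you should cite the \emph{uniform} estimate \eqref{uniform bound of h varphi s} together with the explicit form of $\tilde{h}e^{\tilde{v}_s}$ rather than the merely pointwise convergence on $\Hol$.
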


\begin{proof}

We begin by differentiating Kazdan-Warner equation as well as the approximated equation with respect to the $\beta^{th}$ coordinate of $\cHol$. The $\beta$-differentiation clearly commutes with $\Delta_\omega$ since they are defined on different spaces. We obtain the following linearizations:

\begin{equation}
\Delta_\omega \varphisb=-s^2\left(\tilde{h}^\alpha+\tilde{h}\varphisb\right)e^{\tilde{\varphi}_s},
\label{linearized Kazdan-Warner}
\end{equation}

\noindent and

\begin{equation}
\Delta_\omega \vsb=-s^2\left(\tilde{h}^\alpha+\tilde{h}\vsb\right)e^{\tilde{v}_s}+\tilde{E}_s^\alpha.
\label{linearized approximated Kazdan-Warner}
\end{equation}

The $\beta$-derivatives of $\tilde{v}_s$ and $\tilde{E}_s$ can be readily computed:

\begin{equation}
\vsb=\frac{1}{s^2}\frac{-s^2\tilde{h}}{\Delta_\omega\left(-\log(-\tilde{h})\right)-\tilde{c}(s)}\left[\frac{\Delta_\omega\left(-\log(-\tilde{h)}\right)-\tilde{c}(s)}{\tilde{h}}\right]^\beta,
\label{derivative of approximated solutions}
\end{equation}

\noindent and

\begin{equation}
\tilde{E}_s^\beta=\frac{1}{\tilde{c}(s)} \Delta_\omega\left[\frac{1}{\frac{\Delta_\omega\left(-\log(-\tilde{h)}\right)}{\tilde{c}(s)}-1}\right]\left[\Delta_\omega\left(-\log(-\tilde{h)}\right)\right]^\beta.
\label{derivative of error functions}
\end{equation}

\noindent One can readily verify that for $s$ large enough, the fact that $\tilde{h}$ is a smooth negative function on $\cHol\times\Sigma$ implies that both $\vsb$ and $\tilde{E}_s^\beta$ are $\frac{1}{s^2}$ times functions that are uniformly bounded on $\cHol\times\Sigma$, and therefore uniformly converge to 0 on $\cHol\times\Sigma$ as $s\to \infty$.

We then repeat the arguments of maximum principle as in the proof of Proposition \ref{L p convergence of Z} to the difference of equations \eqref{linearized Kazdan-Warner} and \eqref{linearized approximated Kazdan-Warner} to obtain uniformly decaying bounds for $\varphisb$ and $\vsb$. We find

\begin{equation}
\left|\varphisb-\vsb\right|\leq \sup_{\cHol\times\Sigma}\left|\tilde{E}_s^\beta\left(\frac{\tilde{h}^\beta\left(e^{\tilde{\varphi}_s}-e^{\tilde{v}_s}\right)+\left(\tilde{h}
e^{\tilde{\varphi}_s}-\tilde{h}e^{\tilde{v}_s}\right)\vsb}{\tilde{h}e^{\tilde{\varphi}_s}}\right)\right|.
\label{decaying of beta derivative of varphi}
\end{equation}

\noindent Since $\tilde{h}<0$ is smooth, $\tilde{h}^\beta$ is uniformly bounded on $\cHol\times\Sigma$. By \eqref{uniform bound of h varphi s}, we conclude that the fraction in the right hand side of \eqref{decaying of beta derivative of varphi} is uniformly bounded. Since $\tilde{E}_s^\beta$ decays to 0 uniformly on $\cHol\times\Sigma$ as $s\to\infty$, so does $\varphisb-\vsb$, which implies that $\varphisb \to 0$ uniformly as $s\to\infty$ since $\vsb$ does.

Recall the relation $\tilde{u}_s = \frac{1}{2}\tilde{\varphi}_s+\tilde{\psi}$, we then have

\begin{equation}
\tilde{u}_s^\beta \to \tilde{\psi}^\beta \hspace{1cm} \text{uniformly on }\cHol\times\Sigma \text{ as } s \to \infty,
\label{uniform decay of usb}
\end{equation}

\noindent where $\tilde{\psi}^\beta$ is the solution to

\[\Delta_\omega \tilde{\psi}^\beta = \left(\sqrt{-1}\Lambda F_{\tilde{H}}\right)^\beta.\]

\noindent Since $F_{\tilde{H}}$ is smooth on $\cHol\times\Sigma$, elliptic regularity ensures that $\tilde{\psi}^\beta$ is smooth and therefore uniformly bounded, and so is $\tilde{u}_s^\beta$. Combining this fact with equation (4.28) in \cite{L}, namely that $\left(\tilde{h}e^{\tilde{\varphi}_s}\right)^\alpha \to 0$ pointwise as $s\to \infty$. We have thus shown that $Y_{\alpha,\beta,s} \to 0$ as $s\to \infty$ pointwise on $\cHol$. It remains to construct a uniform bound to apply the dominated convergence theorem to establish convergence in $L^p(\cHol)$, which follows clearly from straightforward computations:

\begin{equation}
\left(\tilde{h}e^{\tilde{\varphi}_s}\right)^\alpha =\left(\tilde{h}^\alpha + \varphisa\right)e^{\tilde{\varphi}_s} \to \lim_{s \to \infty}\left(\tilde{h}^\alpha + \vsa\right)e^{\tilde{v}_s},
\label{bound for B}
\end{equation}

\noindent uniformly on $\cHol\times\Sigma$ as $s\to \infty$. Since every term in the limit is uniformly bounded over $s$, so is $\left(\tilde{h}e^{\tilde{\varphi}_s}\right)^\alpha$. It follows that $Y_{\alpha,\beta,s}$ are uniformly bounded on $\cHol$ and dominated convergence theorem is applicable.

\end{proof}

Finally we show

\begin{proposition}
For each $\alpha,\beta$, the smooth functions

\[X_{\alpha,\beta,s}= \int_\Sigma\frac{\left[\tilde{A}^\alpha + \left(\frac{\p\tilde{\varphi}_s^\alpha}{\p z}+2\frac{\p \tilde{\psi}^\alpha}{\p z}\right)dz\right]\wedge \bar{*}\left[\tilde{A}^\beta + \left(\frac{\p\tilde{\varphi}_s^\beta}{\p z}+2\frac{\p \tilde{\psi}^\beta}{\p z}\right)dz\right]}{2s^2}\]

\noindent converge to 0 as $s\to\infty$ in $L^p(\mathcal{H})$ for all $p$.

\label{L p convergence of X}
\end{proposition}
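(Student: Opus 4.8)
The decisive feature of $X_{\alpha,\beta,s}$ is the overall factor $\frac{1}{2s^2}$, and the plan is to show that what it multiplies stays bounded. Set $N_{\alpha,\beta,s}:=2s^2 X_{\alpha,\beta,s}$, the integral of the numerator. If I can produce a constant $C$, uniform over $\cHol$ and over all large $s$, with $|N_{\alpha,\beta,s}|\le C$, then $|X_{\alpha,\beta,s}|\le \frac{C}{2s^2}\to 0$ pointwise on $\cHol$; since $\cHol$ is compact, the constant $C$ is an $L^1(\cHol)$ function dominating every $|X_{\alpha,\beta,s}|^p$, and the dominated convergence theorem delivers convergence to $0$ in each $L^p(\cHol)$. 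This is the same scheme as in Propositions~\ref{L p convergence of Z} and~\ref{L p convergence of Y}, so the entire content is the uniform bound on $N_{\alpha,\beta,s}$.

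To get that bound I would write the two one-forms appearing in the integrand as $\Theta_s^\alpha:=\tilde A^\alpha+\big(\frac{\p\varphisa}{\p z}+2\frac{\p\tilde\psi^\alpha}{\p z}\big)dz$ and its $\beta$-analogue, so that $N_{\alpha,\beta,s}=\int_\Sigma \Theta_s^\alpha\wedge\bar{*}\,\overline{\Theta_s^\beta}$ is the $L^2(\Sigma)$ pairing of $\Theta_s^\alpha$ and $\Theta_s^\beta$. By Cauchy--Schwarz it suffices to bound $\|\Theta_s^\alpha\|_{L^2(\Sigma)}$ uniformly. The terms $\tilde A^\alpha$ and $\frac{\p\tilde\psi^\alpha}{\p z}$ are independent of $s$ and, by Proposition~\ref{extension of norm function} together with elliptic regularity for $\Delta_\omega\tilde\psi^\alpha=(\sqrt{-1}\Lambda F_{\tilde H})^\alpha$, extend to smooth objects on the compact $\cHol\times\Sigma$, hence are uniformly bounded. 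Everything therefore reduces to a uniform $L^2(\Sigma)$ bound on the single term $\frac{\p\varphisa}{\p z}$.

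For this I would run the energy identity obtained by pairing the linearization~\eqref{linearized Kazdan-Warner} with $\varphisa$ over $\Sigma$: since $\Delta_\omega$ is the positive Laplacian, $\int_\Sigma|\nabla\varphisa|^2=-s^2\int_\Sigma \varphisa\,(\tilde h^\alpha+\tilde h\,\varphisa)e^{\tilde\varphi_s}$. Writing $\tilde h^\alpha+\tilde h\,\varphisa=e^{-\tilde\varphi_s}(\tilde h e^{\tilde\varphi_s})^\alpha$ and using, on the one hand, that $\varphisa=\vsa+(\varphisa-\vsa)$ is uniformly bounded (the approximated solution $\vsa$ of~\eqref{definition of approximated solutions} is explicitly bounded and $\varphisa-\vsa\to0$ uniformly by~\eqref{decaying of beta derivative of varphi}), and on the other hand that $(\tilde h e^{\tilde\varphi_s})^\alpha=O(1/s^2)$ uniformly, the explicit $s^2$ is absorbed and $\int_\Sigma|\nabla\varphisa|^2=O(1)$ uniformly in $s$ and on $\cHol$. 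This is exactly the $L^2(\Sigma)$ gradient bound the previous step requires, and the proof then closes as in the first paragraph.

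The hard part is the uniform estimate $(\tilde h e^{\tilde\varphi_s})^\alpha=O(1/s^2)$. It cannot be read off from the differentiated Kazdan--Warner equation alone, since $(\tilde h e^{\tilde\varphi_s})^\alpha=-\frac{1}{s^2}\Delta_\omega\varphisa$ would make the energy identity circular; the rate must instead come from comparison with the approximated family, for which $(\tilde h e^{\tilde v_s})^\alpha=-\frac{1}{s^2}\big(\Delta_\omega(-\log(-\tilde h))\big)^\alpha=O(1/s^2)$ is explicit, together with an $\alpha$-differentiated version of the $C^0$ comparison~\eqref{uniform bound of h varphi s}. Establishing the latter is precisely the linearized maximum-principle argument on the difference of~\eqref{linearized Kazdan-Warner} and~\eqref{linearized approximated Kazdan-Warner}: the dangerous $s^2$ coefficient is tamed by the strict sign $\tilde h<0$, and the forcing is the linearized error $\tilde E_s^\alpha$ of~\eqref{derivative of error functions}, which is itself $O(1/s^2)$. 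The remaining subtlety is uniformity up to the compactification boundary, where $\varphi_s$ may degenerate; here the smoothness of $\tilde h$ and $F_{\tilde H}$ on all of $\cHol\times\Sigma$ (Propositions~\ref{extension of curvature} and~\ref{extension of norm function}) is what forces every constant in the argument to be uniform over the compact $\cHol$.
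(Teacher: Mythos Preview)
Your overall plan---bound the numerator $N_{\alpha,\beta,s}$ uniformly over $\cHol$ and let the explicit $1/(2s^2)$ do the rest---is exactly the paper's scheme. The genuine gap is your treatment of $\tilde A^\alpha$. You assert that $\tilde A^\alpha$ ``extends to a smooth object on the compact $\cHol\times\Sigma$, hence is uniformly bounded'', citing the extension propositions. But Proposition~\ref{extension of curvature} extends only the \emph{curvature} $F_{\tilde H}=\bar\partial\partial\log\tilde H$, and its proof works precisely because $\bar\partial\partial\log|z-w_\alpha|^2=0$ kills the singular factor. That cancellation does not occur for the connection one-form itself: near a developing common zero $w_\alpha$ one has $\tilde A\sim -m_\alpha\,(z-w_\alpha)^{-1}\,dz$, and differentiating in the modulus direction makes it worse, not better. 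So $\tilde A^\alpha$ is not pointwise (or $L^\infty$) bounded on $\cHol\times\Sigma$, and your appeal to smooth extension fails. Consequently the Cauchy--Schwarz step, which needs a uniform bound on $\|\tilde A^\alpha\|_{L^2(\Sigma)}$, has no justification as written.

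The paper repairs exactly this point: it integrates by parts,
\[
\int_\Sigma \tilde A^\alpha\wedge\bar *\,\tilde A^\beta
=\int_\Sigma d'(\log\tilde H)^\alpha\wedge d''(\log\tilde H)^\beta
=-\int_\Sigma (\log\tilde H)^\alpha\,\tilde F^\beta,
\]
and then shows directly that $(\log\tilde H)^\alpha$ has at worst a $|z-w_\alpha|^{-1}$ singularity, which is $L^1$ on $\Sigma$ with a bound uniform over $\cHol$, while $\tilde F^\beta$ is uniformly bounded by Proposition~\ref{extension of curvature}. This $L^1\times L^\infty$ pairing is what survives at the compactification boundary; an $L^2\times L^2$ pairing does not, at least not by the smooth-extension reasoning you invoke. (Amusingly, taking $\alpha=\beta$ in the paper's bound shows that $\|\tilde A^\alpha\|_{L^2(\Sigma)}$ \emph{is} uniformly bounded---but proving this requires precisely the integration-by-parts argument you omitted.) Your energy-identity route to bounding $\partial_z\tilde\varphi_s^\alpha$ is more elaborate than necessary; the paper just cites the uniform $H^{l,\infty}$ estimates already established for $\tilde\varphi_s^\alpha$ (Lemma~4.4 of \cite{L} together with the smooth extension of $\tilde h$), so no new energy argument is needed there.
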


\begin{proof}

By Lemma 4.4 in \cite{L} and the smooth extension of $\tilde{h}$, $\frac{\p \tilde{\varphi}_s^\alpha}{\p z}$ and $\frac{\p \tilde{\varphi}_s^\beta}{\p z}$ are uniformly bounded. $\frac{\p \tilde{\psi}^\alpha}{\p z}$ and $\frac{\p \tilde{\psi}^\beta}{\p z}$ are smooth on $\cHol\times\Sigma$, independent of $s$, and therefore uniformly bounded. We need to, however, ensure that the integrals

\[I_{\alpha,\beta,s}^1:=\int_\Sigma \tilde{A}^\alpha \wedge \bar{*} \tilde{A}^\beta \]

\noindent and

\[I_{\alpha,\beta,s}^2:=\int_\Sigma \tilde{A}^\alpha  \wedge \bar{*}  \left(\frac{\p\tilde{\varphi}_s^\beta}{\p z}+2\frac{\p \tilde{\psi}^\beta}{\p z}\right)dz\]

\noindent are uniformly bounded on $\cHol$. The boundedness conditions ensure that the numerator of the integrand of $X_{\alpha,\beta,s}$ are uniformly bounded and therefore decays to 0 uniformly as $s \to \infty$.

Recall that, for $\tilde{\phi}\in\Hol$, $\tilde{A}(\tilde{\phi})$ is the unitary connection form with respect to the metric $\tilde{\phi}^*H_{FS}$. It is locally given by the (1,0) form

\[\tilde{A}(\tilde{\phi})=d'\left(\log \tilde{H}(\tilde{\phi})\right),\]

\noindent where

\[\tilde{H}(\tilde{\phi})= \frac{1}{\sum_{i=0}^k |\tilde{\phi_i}|^2},\]

\noindent is a smooth real function defined over a precompact coordinate neighborhood $U \subset \Sigma$. Using partition of unity to piece together these local expressions, we then have,

\begin{eqnarray}
I_{\alpha,\beta,s}^1 &=& \int_\Sigma \left[d'\left(\log \tilde{H}\right)\right]^\alpha \wedge \left[d''\left(\log \tilde{H}\right)\right]^\beta \nonumber \\
                     &=& \int_\Sigma d'\left(\log \tilde{H}\right)^\alpha \wedge d''\left(\log \tilde{H}\right)^\beta \nonumber \\
                     &=& -\int_\Sigma \left(\log \tilde{H}\right)^\alpha d'd''\left(\log \tilde{H}\right)^\beta \nonumber \\
                     &=& -\int_\Sigma \left(\log \tilde{H}\right)^\alpha \tilde{F}^\beta, \nonumber \\
\label{L 2 norm of initial connection forms}
\end{eqnarray}

\noindent where $\tilde{F}^\beta$ is the $\beta$-derivative of the curvature (1,1)-form of $\tilde{A}$. The indices $\alpha,\beta$ commute with integration and exterior derivatives since they are operators defined independently on different spaces. The third equality follows from Stoke's theorem. We observe that $\tilde{F}^\beta$ extends smoothly to a two form on $\Sigma$ depending smoothly on $\cHol$, and therefore $\int_\Sigma \tilde{F}^\beta$ is a uniformly bounded function on $\cHol$. $\log \tilde{H}$, however, might blow up near the boundary of $\Hol$. Nevertheless, we claim that the function

\begin{equation}
\int_\Sigma \left|\left(\log \tilde{H}\right)^\alpha\right| vol_\Sigma
\label{log integral}
\end{equation}

\noindent is uniformly bounded on $\cHol$. Indeed, the singularities of the function

\[\log \left(\frac{1}{\sum_{i=0}^k |\tilde{\phi_i}|^2}\right)\]

\noindent are precisely the common zeros of the $\tilde{\phi_i}$'s that develop near the boundary of $\cHol$. In the coordinate description of $\cHol$ given in \eqref{coordinate of Hol}, it corresponds to condition that for some $\alpha$, every $\tilde{\phi_i}$ vanishes at $w_\alpha$ with order $m^i_\alpha >0$. Let $m_\alpha$ be the minimum of these orders. Fix a open neighborhood $U\subset \Sigma$ around $w_\alpha$ such that for all $i$, $|\tilde{\phi_i}|=|z-w_\alpha|^{m_\alpha}f_i$, where $f_i$ is a nonvanishing smooth function on $U$. We then rewrite

\[\log \tilde{H}=\log \left(\frac{1}{\sum_{i=0}^k |\tilde{\phi_i}|^2}\right)=m_\alpha \log \left(\frac{1}{|z-w_\alpha|^2}\right)+\tilde{G},\]

\noindent where $\tilde{G}$ is a smooth function on $\cHol\times U$. Differentiating with respect to $w_\alpha$ and taking the norm, we have

\begin{equation}
\left|\left(\log\tilde{H}\right)^\alpha\right|=\frac{\Big|1-\tilde{G}^\alpha(z-w_\alpha)\Big|}{|z-w_\alpha|},
\label{alpha derivative of log H}
\end{equation}

\noindent which is integrable over $U$, using polar coordinate, and the integral depends smoothly on $w_\alpha$, the location of common zeros for $\tilde{\phi_i}$'s. We conclude that there is a constant $K_U$ such that

\begin{equation}
\int_U \left|\left(\log \tilde{H}\right)^\alpha\right| dvol_\Sigma \leq K_U
\label{integral of log H derivative}
\end{equation}

\noindent holds on all $\cHol$. Since $\Sigma$ is compact, we conclude that

\[\int_\Sigma \left|\left(\log \tilde{H}\right)^\alpha\right| dvol_\Sigma\]

\noindent is uniformly bounded over $\cHol$. This proves the claim.

From the claim and the uniform bound of $\int_\Sigma \tilde{F}^\beta$, we conclude that for all $\alpha,\beta$, $I_{\alpha,\beta,s}^1$ are uniformly bounded, over $s$, on $\cHol\times\Sigma$. In fact, since

\[ \left(\frac{\p\tilde{\varphi}_s^\beta}{\p z}+2\frac{\p \tilde{\psi}^\beta}{\p z}\right)\]

\noindent are uniformly bounded on $\cHol\times\Sigma$, estimate on $\int_\Sigma \left|\left(\log \tilde{H}\right)^\alpha \right| dvol_\Sigma$ also shows that $I_{\alpha,\beta,s}^2$ are uniformly bounded for all $\alpha,\beta,s$.

\end{proof}

The main theorem is now an immediate consequence of the analytic results we have established.

\begin{proof}(\emph{of the Main Theorem \ref{Main Theorem}})

In terms of local representations of metrics on coordinates $(w_1,\ldots,w_m)$ on $\mathcal{U}$, the volume forms for each $s$ is

\[dvol_s^* = \sqrt{\det(g_{\alpha,\beta,s}^*)}dw_1 \wedge \cdots \wedge dw_m.\]

\noindent From Propositions \ref{L p convergence of Z}-\ref{L p convergence of X}, it follows that $\det(g_{\alpha,\beta,s})^*$ converge to $\det(g_{{L^2},\alpha,\beta})$ in all $L^p(\cHol)$, and therefore $L^p(\Hol)$.

Covering $\Hol$ by local coordinate patches $\{\mathcal{U}\}$, there is a smooth partition of unity $\{\psi_\mathcal{U}\}$ subordinate to this covering. The $L^2$ volume of $\Hol$ is then

\begin{eqnarray}
\text{Vol }\Hol &=& \int_{\Hol} dvol_{L^2} \nonumber \\
                &=&\sum_\mathcal{U} \psi_\mathcal{U}\int_\mathcal{U}  dvol_{L^2} \nonumber \\
                &=&\sum_\mathcal{U}\psi_\mathcal{U} \int_\mathcal{U} \lim_{s\to\infty} dvol_s^*  \nonumber \\
                &=&\sum_\mathcal{U}\psi_\mathcal{U} \lim_{s\to\infty}\int_\mathcal{U}  dvol_s^*  \nonumber \\
                &=&\sum_\mathcal{U} \psi_\mathcal{U}\lim_{s\to\infty}\int_{\Phi_s^*{\mathcal{U}}}  dvol_s^*  \nonumber \\
                &=&\lim_{s\to\infty}\int_{\nu_{k+1,0}(s)}  dvol_s . \nonumber \\
                \label{limiting L 2 metric}
\end{eqnarray}

\noindent \eqref{limiting L 2 metric} then allows us to take the limit of the volume of $\nu_{k+1,0}(s)$ given by \eqref{volume of moduli space nonvanishing} as $s \to \infty$. Since the volume of $\Sigma$ is normalized to be $1$, we have proved the validity of the formula.

\end{proof}

One might notices that the extension of the metrics and curvature forms over the boundary of $\Hol$ lowers the topological degree of the pullback line bundle $\tilde{\phi}^* \mathcal{O}(1)$. In fact, we have experimented the bubbling phenomenon in the elementary settings. The author is eager to explore generalization of this result to more general cases, such as the one posed in \cite{Ba1}.

\renewcommand{\abstractname}{Acknowledgements}
\begin{abstract}
This work is supported by the Ministry of Science and Technology of the Republic of China (Taiwan) grant 103-2115-M-006-015-MY2.
\end{abstract}

\end{document}